\let\originalleft\left
\let\originalright\right
\renewcommand{\left}{\mathopen{}\mathclose\bgroup\originalleft}
\renewcommand{\right}{\aftergroup\egroup\originalright}
\renewcommand{\Pr}{\operatorname*{\textbf{\textup{Pr}}}}
\DeclareMathOperator*{\EE}{\textbf{\textup{E}}}
\renewcommand{\geq}{\geqslant}
\renewcommand{\ge}{\geqslant}
\renewcommand{\leq}{\leqslant}
\renewcommand{\le}{\leqslant}
\newcommand{\set}[1]{\{ #1 \}}
\newcommand{\lt}{\left}
\newcommand{\rt}{\right}
\newcommand{\md}{\middle}
\newcommand{\congest}{\ensuremath{\mathsf{CONGEST}}}
\newcommand{\onlyShort}[1]{\ifthenelse{\boolean{short}}{#1}{}}
\newcommand{\onlyLong}[1]{\ifthenelse{\boolean{short}}{}{#1}}
\newtheorem{lemma}{Lemma}
\newtheorem{theorem}{Theorem}
\newtheorem{claim}{Claim}
\newtheorem{corollary}{Corollary}
\newcommand{\ann}[1]{%
\text{\footnotesize(#1)}\quad}
\newcommand{\msg}[1]{\langle #1 \rangle}
\newcommand{\local}{\ensuremath{\mathsf{LOCAL}}}
\newcommand{\kt}{\ensuremath{\mathsf{KT}}}
\newcommand{\fastIntervalMatching}{\ensuremath{\textsf{Fast-Interval-Matching}}}
\newcommand{\intervalConstruction}{\ensuremath{\textsf{Interval-Construction}}}
\newcommand{\uport}{\ensuremath{\mathsf{S}}}
\begin{document}
\begin{frontmatter}
\title{Perfect Matching with Few Link Activations}

\author[1]{Hugo Mirault}
\ead{hmirault@augusta.edu}

\author[1]{Peter Robinson}
\ead{perobinson@augusta.edu}

\author[1]{Ming Ming Tan}
\ead{mtan@augusta.edu}

\author[2]{Xianbin Zhu}
\ead{xianbin.aaronzhu@gmail.com}

\affiliation[1]{organization={School of Computer \& Cyber Sciences, Augusta University},
addressline={1120 15th Street},
city={Augusta},
postcode={30912},
state={GA},
country={USA}}
\affiliation[2]{organization={Department of Computer Science, Aalto University},
addressline={Otakaari 24},
city={Espoo},
postcode={02150},
country={Finland}}

\begin{abstract}
We consider the problem of computing a perfect matching problem in a synchronous distributed network, where the network topology corresponds to a complete bipartite graph.
The communication between nodes is restricted to activating communication links, which means that instead of sending messages containing a number of bits, each node can only send a pulse over some of its incident links in each round. 
In the port numbering model, where nodes are unaware of their neighbor's IDs, we give a randomized algorithm that terminates in $O\lt( \log n \rt)$ rounds and has a pulse complexity of $O\lt( n\log n \rt)$, which corresponds to the number of pulses sent over all links.
We also show that randomness is crucial in the port numbering model, as any deterministic algorithm must send at least $\Omega\lt( n^2 \rt)$ messages in the standard $\local$ model, where the messages can be of unbounded size.
Then, we turn our attention to the $\kt_1$ assumption, where each node starts out knowing its neighbors' IDs.
We show that this additional knowledge enables significantly improved bounds even for deterministic algorithms.
First, we give an $O\lt( \log n \rt)$ time deterministic algorithm that sends only $O\lt( n \rt)$ pulses.
Finally, we apply this algorithm recursively to obtain an exponential reduction in the time complexity to $O\lt( \log^*n\log\log n \rt)$, while slightly increasing the pulse complexity to $O\lt( n\log^*n \rt)$.
All our bounds also hold in the standard $\congest$ model with single-bit messages.

\end{abstract}

\begin{keyword}
distributed graph algorithm, message complexity, perfect matching
\end{keyword} 
\end{frontmatter}

\section{Introduction} \label{sec:intro}

We study the fundamental problem of computing a perfect matching among the nodes of a network, where the communication capabilities are limited to link activations, which is motivated by simple biological cell networks or sensor networks of resource-restricted devices.
Our model is inspired by protein-interaction networks and other biochemical signaling networks, where the alphabet of the messages between nodes are effectively restricted to a binary alphabet (i.e., signal or no signal), as demonstrated by \cite{cheong2011information}.

Concretely, we consider a complete bipartite graph consisting of two vertex sets $L$ and $R$, each consisting of $n$ nodes, and we assume that each node knows whether it is in $L$ or in $R$.
The communication between the nodes follows a \emph{link activation model}, which can be interpreted as a restricted variant of the standard \textsf{CONGEST} model~\cite{peleg}: 
In each synchronous round, each node may \emph{activate} any subset of its incident edges by sending a \emph{pulse} over these edges. 
That is, if a node $u$ sends a pulse over the edge $\set{u,v}$ in round $r$, then node $v$ observes this activation by $u$ at the start of round $r+1$.
We emphasize that these link activations are the only form of communication available in our model, which, in particular, prevents nodes from directly exchanging bit strings of information, as is commonly assumed in other models of distributed computation.
In particular, activating the link $\set{u,v}$ does \emph{not} allow $u$ to transmit a sequence of bits to $v$ in round $r$, but merely corresponds to a pulse that can be observed by $v$.

A model with similar features is the \emph{beeping model}, introduced in \cite{cornejo2010deploying}, where, in each synchronous round, a node may choose to ``beep'' across \emph{all} of its incident links, whereas, on the receiving side, each node can only distinguish between the case where none of its neighbors beeped or at least one of them did.
We refer the reader to \onlyLong{Navlakha and Bar-Joseph~}\cite{navlakha2014distributed} for a detailed survey describing the common aspects and differences of various biological systems and distributed computing models.

To quantify the performance of algorithms in the link activation model, we analyze the \emph{round complexity} (also known as \emph{time complexity}), which is the worst case number of rounds, as well as the \emph{pulse complexity}, whereby the latter corresponds to the worst-case number of pulses over all edges in any execution, for deterministic algorithms.
When considering randomized algorithms, we assume that each node has access to a private source of unbiased random bits that it may query during its local computation at the beginning of each round.
In that case, we also consider the \emph{expected pulse complexity}, where the expectation is computed over the random bits of the algorithm. 

While the pulse complexity is not a standard metric for distributed algorithms, it is closely related to the \emph{message complexity}. 
In fact, any upper bound on the pulse complexity also implies the same bound on the message complexity. 
Hence, designing algorithms with low pulse complexity immediately gives rise to message-efficient algorithms that only need to send a small number of single-bit messages in the $\congest$ model. 
Conversely, given an algorithm in $\congest$ model that sends only single-bit messages, one can simulate the algorithm in our link activation model with pulse complexity equal to the message complexity.

Our main goal of this work is to quantify the achievable round and pulse complexity for computing a perfect matching in the link activation model.
To this end, we distinguish two standard assumptions pertaining to the initial knowledge of the nodes:
\begin{compactenum} 
\item 
In the \emph{port numbering} or \emph{$\kt_0$ model}~\cite{peleg,DBLP:journals/csur/Suomela13}, we assume that nodes are anonymous and do not have unique IDs. Moreover, the incident edges to each node are numbered $1,\dots,n$ and, at the start of each round, every node $v$ only learns the subset of its incident port numbers that were activated by the respective other endpoint during the previous round.
A crucial difficulty in this setting is that the assignment of port numbers to communication links is not known in advance to the algorithm, and may be chosen adversarially. 
\item When considering the \emph{$\kt_1$ assumption}~\cite{AGPV88}, on the other hand, we assume that every node is equipped with a unique integer ID, chosen from some range of polynomial size, and an algorithm needs to satisfy the stated properties for all possible ID assignments to the nodes. Each node knows the IDs of all its neighbors, and, since we consider a complete bipartite graph, this means that every node in $L$ knows the IDs of all nodes in $R$, and vice versa. In particular, this allows a node in $L$ to directly activate a link to a node $v$ with some specific ID.
At the start of each round $r\ge 2$ every node $u$ learns the IDs of the nodes that activated a link to $u$ in round $r-1$.
\end{compactenum}

\onlyLong{\subsection{Our Contributions}}
\onlyShort{\medskip\noindent\textbf{Our Contributions.}}
We present several new algorithms and a lower bound for computing a perfect matching in complete bipartite graphs. As elaborated above, our results also hold in the $\congest$ model where the per-link bandwidth restriction is just one bit.\onlyShort{
The complete proofs of our results can be found in the full 
version of the paper~\cite{fullpaper}.
}
\begin{itemize} 
\item In Section~\ref{sec:up_bipartite}, we give a simple randomized algorithm that terminates in $O\lt( \log n \rt)$ rounds with a pulse complexity of $O\lt( n\log n \rt)$ with high probability. The algorithm assumes the port numbering model. 
\onlyLong{We also give an experimental evaluation of the performance of this algorithm in Section~\ref{sec:simulation} that shows that the asymptotic time complexity bound does not hide any large constants. }
\item \onlyLong{In Section~\ref{sec:kt1},}\onlyShort{Then,} we focus on the impact of the more powerful $\kt_1$ assumption, where nodes know the IDs of their neighbors, and show that this yields significantly improved bounds:
    \begin{itemize} 
    \item \onlyLong{In Section~\ref{sec:kt1_simple}, we}\onlyShort{We} show that the $\kt_1$ assumption is sufficient to circumvent the lower bound of Section~\ref{sec:lb}, by giving a deterministic algorithm that computes a perfect matching in $O\lt( \log n \rt)$ rounds and sends only $O\lt( n \rt)$ pulses.
    \item \onlyLong{In Section~\ref{sec:kt1_recursive}, we}\onlyShort{Next, we} leverage this deterministic algorithm\onlyLong{ from Section~\ref{sec:kt1_simple}} to obtain an exponential improvement in the number of rounds, yielding a time complexity of $O\lt( \log\log n \cdot \log^*n \rt)$, at the cost of a slightly increased pulse complexity of $O\lt( n \log^*n \rt)$.
    \end{itemize}
\item Finally, in Section~\ref{sec:lb}, we demonstrate that randomization is indeed crucial for achieving a low pulse complexity or message complexity in the port numbering model. We show that any deterministic algorithm must activate at least $\Omega\lt( n^2 \rt)$ links in the port numbering model. 
\onlyLong{
To strengthen the lower bound, we prove this result in the standard $\local$ model, where nodes can exchange messages of unbounded size in a single round.
This result stands in stark contrast to other symmetry-breaking problems such as leader election, which, as shown in \cite{DBLP:journals/dc/ChatterjeePR20}, can be solved deterministically in complete bipartite graphs in $O\lt( \log n \rt)$ rounds and $O\lt( n\log n \rt)$ messages, even in the $\congest$ model. 
}
\end{itemize}

\onlyLong{
\subsection{Other Related Work}

Closely related to the problem of computing a large matching is the research on distributed load balancing that has received significant attention during the past decades (please refer this book \cite{xu2007load} for more details). The basic problem is the well-known balls-into-bins problem: given $n$ balls and $n$ bins, how to place balls into bins as quickly as possible and the maximum bin load is small? The secret of success is $\textit{randomization}$. Under several assumptions, Lenzen and Wattenhofer \cite{DBLP:journals/dc/LenzenW16} developed tight bounds for parallel randomized load balancing improving the classical result in \cite{DBLP:journals/rsa/AdlerCMR98}, 
by showing that a maximum bin load of $2$ can be achieved in $O\lt( \log^*n \rt)$ rounds. 
While their setting is similar to ours in the sense that they consider a bipartite graphs, where ``balls'' are the nodes on the left and ``bins'' are the nodes on the right, it is unclear how to extend their result to obtain a perfect matching (i.e., a bin load of $1$) in a small number of rounds without sending too many messages.
In particular, one of the obstacles in solving the case of maximum bin load of $1$, is that it is no longer true that each ball has a constant probability (independent of others) to be sent to a bin that can accept the ball. This is, however, true for the case of maximum bin load of $2$.

In \cite{DBLP:conf/podc/Davies23a}, Davies  shows that a round of $\textsf{Broadcast CONGEST}$ can be simulated by noisy beeping model \cite{DBLP:journals/iandc/AshkenaziGL22} (and noiseless beeping model) within $O(\Delta \log n)$ rounds where $\Delta$ is the maximum degree in the network, which is near-optimal by the proposed lower bound. As an application of the above simulation, Davies gave a maximal matching algorithm in~$O(\Delta \log^2 n)$  rounds in the noisy beeping model and proved that maximal matching requires~$\Omega(\Delta \log n)$ rounds in the (noiseless) beeping model.

Similar to algorithms in the beeping model, content-oblivious algorithms are robust to noises. The motivation of content-oblivious algorithms is to design fault-tolerant algorithms that are robust to an unlimited amount of corruptions during communication. In \cite{DBLP:journals/dc/CensorHillelCGS23}, they give a method that can compile any distributed algorithms into content-oblivious ones over 2-edge-connected graphs (for other graphs, the noise can destroy any non-trivial computation). ~\cite{DBLP:conf/wdag/GhaffariK18a} removed the assumption that a distinguished node is required to start the computation. 

\section{Preliminaries}
\textbf{Notations. }
We follow the standard convention of the following notations.
We denote $[d]$ as the set of all integers from 
$1$ to $d$. 
We say that an event happens with high probability (w.h.p) if it occurs with probability at least $1-\frac{1}{n^{\Omega(1)}}$.
For $d \geq 0$, we denote $\log^{(d)}n$ as $d$-fold iterated logarithm of $n$, which is defined by applying the logarithm function $d$ times recursively to $n$, whereby $\log^{(0)}n=n$. 

In Section~\ref{sec:up_bipartite} we will make use of the following Chernoff bound for negatively correlated random variables: 
\begin{lemma}[\cite{impagliazzo2010constructive}] \label{lem:Chernoff_negative} 
Let $X_1, \ldots, X_d$ be (not necessarily independent) Boolean random variables and suppose that, for some $\delta \in [0, 1]$, it holds that, for every index set $S \subset [d]$, $Pr[\bigwedge_{i\in S} (X_i=1)] \leq \delta ^ {|S|}$. Then, for any $\eta \in [\delta, 1]$, we have $Pr[\sum_{i=1}^d X_i \geq \eta d]\leq e^{-2d(\eta-\delta)^2}$. 
\end{lemma}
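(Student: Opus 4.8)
The plan is to prove this by the exponential-moment (Chernoff) method, using the correlation hypothesis to reduce the moment generating function of $X = \sum_{i=1}^d X_i$ to exactly the one produced by $d$ \emph{independent} Bernoulli$(\delta)$ variables. First I would fix any $t \ge 0$ and apply Markov's inequality to $e^{tX}$, which gives $\Pr\br{X \ge \eta d} \le e^{-t\eta d}\,\EE[e^{tX}]$. This isolates the whole difficulty into the task of upper bounding $\EE[e^{tX}]$, and the restriction $t \ge 0$ will be used crucially below.

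The key step is the following expansion, which is the only place where the dependence among the $X_i$ is handled. Since each $X_i \in \set{0,1}$, we have $e^{tX_i} = 1 + (e^t-1)X_i$, so $e^{tX} = \prod_{i=1}^d \br{1 + (e^t-1)X_i}$. Expanding the product over subsets and taking expectations,
\[
\EE[e^{tX}] = \sum_{S \subseteq [d]} (e^t-1)^{|S|}\, \EE\Bigl[\prod_{i \in S} X_i\Bigr] = \sum_{S \subseteq [d]} (e^t-1)^{|S|}\, \Pr\Bigl[\bigwedge_{i\in S}(X_i=1)\Bigr].
\]
Because $t \ge 0$ makes every coefficient $(e^t-1)^{|S|}$ nonnegative, I can apply the hypothesis $\Pr[\bigwedge_{i\in S}(X_i=1)] \le \delta^{|S|}$ termwise and re-sum via the binomial theorem:
\[
\EE[e^{tX}] \le \sum_{S \subseteq [d]} \br{(e^t-1)\delta}^{|S|} = \br{1 + (e^t-1)\delta}^d.
\]
This right-hand side is precisely the moment generating function of a sum of $d$ independent Bernoulli$(\delta)$ variables, so the negative-correlation assumption has let me dominate the true (dependent) MGF by the independent-case expression.

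From here the argument is the standard Hoeffding tail bound. Applying Hoeffding's lemma to a single Bernoulli$(\delta)$ variable yields $e^{-t\delta}\br{1 + (e^t-1)\delta} \le e^{t^2/8}$, hence $\EE[e^{tX}] \le e^{t\delta d + t^2 d/8}$ and therefore $\Pr\br{X \ge \eta d} \le e^{-t(\eta-\delta)d + t^2 d/8}$. Minimizing the exponent over $t \ge 0$ gives the optimizer $t = 4(\eta-\delta)$, which is nonnegative exactly because $\eta \ge \delta$, and substituting it produces the claimed bound $e^{-2d(\eta-\delta)^2}$.

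I expect the only genuine obstacle to be the moment-generating-function expansion, since that is the single point at which the hypothesis on joint upper-tail probabilities is invoked; once $\EE[e^{tX}]$ is shown to be dominated by the independent-case MGF, the remainder is textbook. The two technical points to watch are that $t$ must stay nonnegative throughout so that the inequality $\Pr[\bigwedge_{i\in S}(X_i=1)] \le \delta^{|S|}$ is applied in the correct direction, and that the term $S = \emptyset$ contributes the constant $1 = \delta^0$ consistently (so the expansion and the binomial resummation include the empty set without issue).
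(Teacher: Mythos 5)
Your proof is correct. Note that the paper itself does not prove this lemma --- it imports it directly from the cited reference --- so there is no internal proof to compare against; judged on its own, your argument is sound: the subset expansion $e^{tX}=\prod_i\bigl(1+(e^t-1)X_i\bigr)$ is exactly the right way to feed the joint-probability hypothesis into the moment generating function, the sign condition $t\ge 0$ is correctly identified as what makes the termwise substitution legitimate, and the Hoeffding-lemma step with the optimizer $t=4(\eta-\delta)$ does yield the exponent $-2d(\eta-\delta)^2$. It is worth noting that your route (dominating the MGF by the independent Bernoulli$(\delta)$ MGF, in the style of Panconesi--Srinivasan) is genuinely different from the proof in the cited source, which avoids moment generating functions altogether: there one samples a random index set $S$ by including each $i$ independently with probability $q$, bounds $\Pr[\bigwedge_{i\in S}(X_i=1)]$ from below by conditioning on the event $\sum_i X_i\ge\eta d$ and from above by the hypothesis, and extracts a KL-divergence bound from which the stated form follows via Pinsker's inequality. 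Your approach is shorter and more classical; the source's approach is ``constructive'' and generalizes to settings where MGF bounds are unavailable. The only point to tidy up is that your expansion ranges over all $S\subseteq[d]$ including $S=[d]$, whereas the lemma as transcribed writes $S\subset[d]$; you should read the hypothesis as holding for all subsets (as in the original reference), since bounding the full-set term via a proper subset would only give $\delta^{d-1}$.
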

}

\onlyLong{\section{An Algorithm for Perfect Matching in the Port Numbering Model}}
\onlyShort{\section{An Algorithm for Perfect Matching in $\kt_0$}}
\label{sec:up_bipartite}
Throughout this section, we consider a complete bipartite graph of $2n$ nodes, $G = (V, E)$ where the vertices are partitioned into two disjoint subsets $L$ and $R$. Nodes in $L$ are referred to as \textit{left} nodes, and nodes in $R$ are referred to as \textit{right} nodes. Note that each node knows whether it is a left or a right node. 
\onlyLong{
\begin{algorithm}
\small
\begin{algorithmic}[1]
    \State Set $T = \lceil\log_{\frac{1}{c}} n\rceil-\Theta(\log \log n)$ and $c = \frac{11}{12}$. 
    \State Initially, each node is \textit{unmatched}.
    \State \textbf{(Stage 1)}
     \For{phase $i = 1$ to $T$}
        \State In parallel, each unmatched left node selects $\left \lfloor \frac{1}{c^{i-1}} \right \rfloor$ random ports (with replacement) to send a $\msg{\texttt{prompt}}$ message. Each unmatched right node that receives a $\msg{\texttt{prompt}}$ message responds with an $\msg{\texttt{acknowledge}}$ message. 
        \State Each unmatched left node that receives $\msg{\texttt{acknowledge}}$ messages randomly selects one to respond with an $\msg{\texttt{invite}}$ message. 
        \State Each unmatched right node that receives $\msg{\texttt{invite}}$ messages randomly selects one to respond with a $\msg{\texttt{matched}}$ message.
        \State Each unmatched node that sends or receives a $\msg{\texttt{matched}}$ message is now matched. 
    \EndFor

    \State \textbf{(Stage 2)} 
    \State At phase $T+1$: 
     \State \hspace{\algorithmicindent} In parallel, each unmatched left node sends a $\msg{\texttt{prompt}}$ message to \emph{all} right nodes.
    \State \hspace{\algorithmicindent} Each unmatched right node that receives $\msg{\texttt{prompt}}$ messages responds with an $\msg{\texttt{acknowledge}}$ message.
    \State \hspace{\algorithmicindent} As a result, each unmatched left (resp. right) node $v$ knows the set $\uport_v$ of \emph{all} the ports that lead to the unmatched right (resp. left) nodes.
    
    \For{phase $i = T+2, \ldots,$ until all nodes have matched} 
        \State Each unmatched left node $v$ randomly selects one port in $\uport_v$ and sends an $\msg{\texttt{invite}}$ message. 
        \State Each unmatched right node that receives $\msg{\texttt{invite}}$ messages randomly selects one to respond with a $\msg{\texttt{matched}}$ message.
        \State As a result,  each unmatched node that sends or receives a $\msg{\texttt{matched}}$ message is matched. 
        \State \emph{Each} newly matched node $v$ sends a message to each port in $\uport_v$ to notify them that it has been matched. 
        \State Each unmatched node $v$ updates $\uport_v$ accordingly to exclude all ports that lead to newly matched nodes.
        \label{line:stage2notify}
    \EndFor
\end{algorithmic}
\caption{An algorithm for computing a perfect matching for complete bipartite graphs with $O(n \log n)$ pulse complexity and $O(\log n)$ time complexity. 
We point out that, in the pseudocode, we refer to messages of certain types, whereas in the link activation model, nodes can only send pulses. 
Since nodes have access to a synchronous global clock, they can interpret pulses received from specific nodes in a certain round as having a message of a specific type.}\label{alg:bipartite}
\end{algorithm}
}
\onlyLong{
In the following, we outline the high-level description of the randomized algorithm for Theorem \ref{alg:bipartite}, see Algorithm \ref{alg:bipartite}.
}
The algorithm is split into two stages. Stage $1$ consists of $T = \lceil\log_{\frac{1}{c}} n\rceil-\Theta(\log \log n)$ phases with the goal to reduce the number of unmatched nodes to $O(\log n)$. Stage $2$ deals with the remaining $O(\log n)$ unmatched nodes. Each phase consists of a constant number of rounds of communication. 
In phase $1$, each of the left nodes randomly selects one port to send a pulse to the right nodes. 
If a right node receives a pulse, then it will be matched by randomly selecting one left node from which it has received the pulse to match. 
We show that with high probability, after the end of phase $1$, the number of unmatched nodes reduces by $c:=\frac{11}{12}.$
Note that the node does not know which port leads to an unmatched neighbor until a pulse is exchanged between the two connected ports. 
Hence, a pulse sent by an unmatched left node might hit a right node that is already matched.
To ensure sufficient unmatched left nodes get matched, in phase $2$, each unmatched left node now sends $\lfloor 1/c \rfloor$ pulses. 
Again, we can show that with high probability, after the end of phase $2$, the number of unmatched nodes is reduced further by $c$. 
Ultimately, we can show that, for each subsequent phase $i\le T$, the unmatched nodes send $O(\frac{1}{c^{i-1}})$ pulses, and with high probability, the number of unmatched nodes is at most $c^in$ at the end of each phase. 
As a result, with high probability, the number of unmatched nodes after $T$ phases is $O(\log n)$. 

In Stage $2$, a matching for the remaining $O(\log n)$ unmatched nodes is constructed in $O(\log n)$ phases and using $O(n \log n)$ pulses. 
The key idea to achieve the desired number of pulses and the time complexity bound in Stage $2$ is that, by exchanging $O(n \log n)$ pulses in total, each unmatched node learns \emph{all} the ports that lead to its unmatched neighbors in each phase.

\begin{theorem} \label{thm:up_bipartite}
   There exists a randomized algorithm that, with high probability, constructs a perfect matching on a complete bipartite graph with $2n$ nodes in $O(\log n)$ time and sends $O(n \log n)$ pulses. 
  
\end{theorem}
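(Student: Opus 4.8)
The plan is to prove three properties simultaneously with high probability: the algorithm halts within $O(\log n)$ rounds, it activates $O(n\log n)$ links, and its output is a perfect matching. Throughout, let $U_i$ be the number of unmatched left nodes at the start of phase $i$. Since every application of the handshake pairs exactly one left node with one right node, $U_i$ equals the number of unmatched right nodes as well, an invariant I would use repeatedly. The heart of Stage~1 is the per-phase estimate: conditioned on $U_i \le c^{i-1}n$, with high probability $U_{i+1} \le c^{i}n$. Iterating this bound over the $T = \lceil\log_{1/c}n\rceil - \Theta(\log\log n)$ phases keeps $U_i \le c^{i-1}n$ for every $i \le T+1$, so that $U_{T+1} \le c^{T}n = O(\log n)$ by the choice of $T$, which is exactly the number of survivors that Stage~2 is designed to absorb.

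To establish the per-phase estimate I would fix a phase $i$ and let $j$ range over the $U_i$ unmatched right nodes. A right node is matched in phase $i$ exactly when it receives at least one $\msg{\texttt{invite}}$, and an invite can reach $j$ only if some unmatched left node both prompts $j$ and subsequently selects $j$ among the nodes that acknowledged it. Writing $p_i = \lfloor 1/c^{i-1}\rfloor$, the left nodes choose their $p_i$ prompt ports independently and uniformly, and since the fraction of unmatched right ports is $U_i/n$, the value of $p_i$ is calibrated so that each left node lands a prompt on an unmatched right node with at least constant probability, namely $1-(1-U_i/n)^{p_i}$, which stays bounded below by a positive constant throughout the regime $c^{i}n < U_i \le c^{i-1}n$. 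I would then let $X_j$ be the indicator that $j$ remains unmatched, so $U_{i+1} = \sum_j X_j$, and apply Lemma~\ref{lem:Chernoff_negative} with $d = U_i$ and $\eta = c^{i}n/U_i$: when $U_i \le c^{i}n$ the desired bound is immediate because matching never increases $U$, while in the remaining regime $\eta \in [c,1)$, so that $\eta - \delta \ge c - \delta$ and the failure probability is at most $e^{-2U_i(c-\delta)^2}$.

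The main obstacle is verifying the hypothesis of Lemma~\ref{lem:Chernoff_negative}, namely that $\Pr[\bigwedge_{j\in S}(X_j=1)] \le \delta^{|S|}$ for every set $S$ of unmatched right nodes and some constant $\delta < c = 11/12$. The events ``$j$ stays unmatched'' are genuinely correlated: two right nodes may be prompted by a common left node, and a left node receiving several acknowledgments issues only a single invite, so conditioning on some nodes being unmatched perturbs the others. I would control this by exploiting that the prompt choices are independent across the left nodes and bounding, for each newly added element of $S$, the conditional probability that it escapes every invite by a fresh prompt-bearing left node, which yields the required product form. A secondary but important point is that $e^{-2U_i(c-\delta)^2}$ is only polynomially small when $U_i = \Omega(\log n)$; this is precisely why Stage~1 terminates once $U_i$ reaches $O(\log n)$, guaranteeing that each phase fails with probability $n^{-\Omega(1)}$ and that a union bound over the $T = O(\log n)$ phases still succeeds with high probability.

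The remaining accounting is routine. In phase $i$ of Stage~1 the $U_i \le c^{i-1}n$ unmatched left nodes send at most $p_i \le 1/c^{i-1}$ prompts each, hence $O(n)$ prompts, and the acknowledge, invite, and match pulses are each dominated by the number of prompts, so every phase costs $O(n)$ pulses and the $T = O(\log n)$ phases cost $O(n\log n)$ in total. In Stage~2, the single all-to-all prompt of phase $T+1$ costs $O(\log n)\cdot n = O(n\log n)$ pulses and lets every surviving node learn the ports to all of its unmatched neighbors; the subsequent phases are a balls-into-bins matching among the $O(\log n)$ survivors, in which a constant fraction is paired off per phase with high probability, so the stage finishes within $O(\log n)$ further rounds while contributing only $o(n)$ additional pulses. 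Summing the two stages gives round complexity $O(\log n)$ and pulse complexity $O(n\log n)$, and since the handshake matches each node at most once and no unmatched node remains, the output is a perfect matching.
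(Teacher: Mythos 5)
Your outline follows the same skeleton as the paper's proof (per-phase geometric decay in Stage~1 via the negatively-correlated Chernoff bound, a union bound over the $T=O(\log n)$ phases, and an expectation-decay argument for the $O(\log n)$ survivors in Stage~2), and your accounting of rounds and pulses is essentially the paper's. However, the step you yourself flag as ``the main obstacle'' --- verifying $\Pr[\bigwedge_{j\in S}(X_j=1)]\le\delta^{|S|}$ for the indicators that right nodes remain unmatched --- is exactly the crux of the argument, and your plan for it does not go through as stated. The difficulty is that the full prompt/acknowledge/invite pipeline is not a balls-into-bins process: a left node that discovers several unmatched right nodes issues only one invite, so conditioning on some right nodes in $S$ being unmatched can reveal where the shared prompt-bearing left nodes directed their invites, and there is no guarantee of a ``fresh'' set of left nodes for each newly added element of $S$. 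Appealing to the independence of the prompt choices alone does not yield the product form for the unmatched events.

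The paper resolves this with a specific device you are missing: it restricts attention to the set $S_i$ of unmatched left nodes that discover \emph{exactly one} unmatched right node. For such a node the invite destination is forced (it received exactly one acknowledgment), so the invites emanating from $S_i$ form a genuine independent, uniform balls-into-bins process on the $U_{i-1}$ unmatched right nodes, to which the elementary negative-correlation lemma (Lemma~\ref{lem:constant_accept}) applies directly. One first shows $\EE[|S_i|]\ge c\,e^{-2}U_{i-1}$ by a direct calculation using $\lfloor 1/c^{i-1}\rfloor$ prompts and the bounds $c^i n\le U_{i-1}<c^{i-1}n$, then concentrates $|S_i|$ with a standard Chernoff bound (legitimate here because the prompt choices of distinct left nodes are independent and $\EE[|S_i|]=\Omega(\log n)$ precisely because $i\le T$), and finally bounds the probability that a right node escapes all $|S_i|$ forced invites by $(1-1/U_{i-1})^{|S_i|}\le e^{-\frac{3}{4}ce^{-2}}=:\delta$. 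Since ``remains unmatched'' implies ``receives no invite from $S_i$,'' the product hypothesis of Lemma~\ref{lem:Chernoff_negative} transfers. Without this (or an equivalent decoupling), your per-phase estimate is unproven, and the rest of the induction has nothing to stand on.
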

\onlyLong{
In the following, we show that Algorithm \ref{alg:bipartite} achieves the stated time and pulse complexity with high probability.
We organize the proof as a series of lemmas.
Let $U_i$ be the number of unmatched right nodes (which is equal to the number of unmatched left nodes) at the end of phase $i$. Initially all  nodes are unmatched, hence, $U_0 = n$.

In our proof, we make use of a balls-into-bins argument where an $\msg{\texttt{invite}}$  message represents a ball, and an unmatched right node represents a bin. 
Note that, strictly speaking, the nodes cannot send ``messages'' carrying additional information, such as indicating that it is an $\msg{\texttt{invite}}$ message. 
However, we assume a synchronous clock, and thus whenever a node sees a pulse in some specific round, it can deduce the intended message type locally.

Let $B_k$ be the indicator variable that bin $k$ does not receive a ball. 

\begin{lemma} \label{lem:constant_accept}
Let $R$ be a set of integers such that for each bin $k$ in $R$, we have $\Pr[B_k=1]<\delta$ for some constant $\delta$. Then, it holds that for every subset $S$ of $R$, 
\begin{align*}
    \Pr\Big[ \bigwedge_{k\in S} (B_k = 1)\Big] \le \delta^{|S|}.
\end{align*}
\end{lemma}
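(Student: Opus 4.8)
The plan is to prove this claim by relating the joint probability $\Pr[\bigwedge_{k \in S}(B_k = 1)]$ to a product of marginal bounds, exploiting the structure of the balls-into-bins process. The key observation is that the event $B_k = 1$ (bin $k$ receives no ball) becomes \emph{less} likely to hold jointly across several bins, because conditioning on some bins being empty forces the balls into the remaining bins, which only increases the chance those remaining bins are hit. This is the classic negative correlation that underlies the setup of Lemma~\ref{lem:Chernoff_negative}, and the claim is precisely the hypothesis needed to invoke that Chernoff bound.

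\textbf{First I would} fix the subset $S \subseteq R$ and argue by a conditioning/coupling argument that
\begin{align*}
\Pr\Big[\bigwedge_{k \in S}(B_k = 1)\Big] = \prod_{j} \Pr\Big[B_{k_j} = 1 \,\Big|\, \bigwedge_{i < j}(B_{k_i} = 1)\Big],
\end{align*}
where $S = \{k_1, \dots, k_{|S|}\}$, and then show that each conditional factor is bounded above by the corresponding unconditional probability $\Pr[B_{k_j} = 1] < \delta$. Intuitively, once we condition on a collection of bins being empty, every ball that is thrown must land in one of the \emph{other} bins; this can only decrease the probability that bin $k_j$ remains empty relative to the unconditioned process. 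Making this monotonicity rigorous is the crux: I would cast it as a statement that the indicators $B_k$ are negatively associated, or argue directly that for any ball, $\Pr[\text{ball misses } k_j \mid \text{ball misses } k_{i}, i<j] \le \Pr[\text{ball misses } k_j]$, since conditioning on missing the $k_i$ concentrates the ball's landing distribution onto fewer bins including $k_j$.

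\textbf{The main obstacle} I anticipate is handling the fact that the number of balls (the $\msg{\texttt{invite}}$ messages) and their individual target distributions are themselves random and correlated with the matching process of earlier rounds, rather than being a clean independent-uniform throw. In the algorithm, the ports chosen and the set of unmatched nodes evolve across phases, so the landing distribution of each ball need not be exactly uniform, and the balls are not thrown independently of one another in general. To address this cleanly, I would condition on the full configuration determining how many balls are thrown and each ball's (possibly non-uniform) target distribution within a single phase, establish the negative-association inequality at that conditional level where the per-ball throws \emph{are} independent, and then average over the configuration, since the bound $\delta^{|S|}$ holds conditionally and is therefore preserved in expectation. The hypothesis $\Pr[B_k = 1] < \delta$ for each $k \in R$ supplies the per-bin marginal bound, and the product structure then yields $\delta^{|S|}$ directly.
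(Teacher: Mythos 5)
Your proposal follows essentially the same route as the paper's proof: decompose the joint probability via the chain rule into conditional factors and bound each one by the unconditional marginal $\Pr[B_k=1]<\delta$ using the negative correlation of the empty-bin events. If anything, you are more careful than the paper, which only states the monotonicity for conditioning on a single bin being empty and for the clean independent-throw setting, whereas you flag both the multi-bin conditioning and the need to condition on the (random) ball configuration before averaging.
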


\begin{proof}
    We observe that the probability that bin $k$ receives a ball conditioned on the event that bin $j$ does not receive a ball is at least the probability that bin $k$ receives a ball without conditioning. This means $\Pr[B_k=0|B_j=1] \geq \Pr[B_k=0]$ or equivalently,  $\Pr[B_k=1|B_j=1] \leq \Pr[B_k=1].$
    Hence, 
  \begin{align*}
    \Pr\Big[ \bigwedge_{k\in S} (B_k = 1)\Big]
    =
        \prod_{k \in S} \Pr\Big[B_k=1|\bigwedge_{j<k, j \in S} (B_j=1)\Big]
    \le
        \prod_{k \in S} \Pr[B_k=1]
    \le
        \delta^{|S|}.
  \end{align*}
\end{proof}

\begin{lemma}\label{lem:phase1}
With high probability, 
$U_1 < c\cdot n$. 
\end{lemma}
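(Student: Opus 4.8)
The plan is to recognize phase~$1$ as a clean balls-into-bins experiment and then invoke the negative-correlation Chernoff bound of Lemma~\ref{lem:Chernoff_negative}. In phase~$1$ we have $i=1$, so each of the $n$ currently-unmatched left nodes selects $\lfloor 1/c^{0}\rfloor = 1$ random port and sends a single $\msg{\texttt{prompt}}$; I would treat each such prompt as a ball and each right node as a bin. Since a left node that prompts a right node subsequently receives an $\msg{\texttt{acknowledge}}$ and returns its unique $\msg{\texttt{invite}}$ to that same node, a right node is matched precisely when it receives at least one prompt, i.e.\ when its bin is non-empty. As matched left and right nodes are in bijection, the count of unmatched right nodes equals the count of empty bins, so $U_1 = \sum_{k=1}^{n} B_k$.

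Next I would pin down the per-bin empty probability. Because each of the $n$ balls lands in bin $k$ independently with probability $1/n$, we have $\Pr[B_k = 1] = (1-1/n)^n < 1/e$ for every finite $n$, so I can fix the constant $\delta = 1/e$ and obtain $\Pr[B_k=1] < \delta$ for all bins. Plugging this into Lemma~\ref{lem:constant_accept} delivers the key tail condition $\Pr[\bigwedge_{k\in S} (B_k=1)] \le \delta^{|S|}$ for every subset $S$, which is exactly the hypothesis demanded by the Chernoff bound for negatively correlated Boolean variables.

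Finally I would apply Lemma~\ref{lem:Chernoff_negative} with $d=n$, the above $\delta = 1/e$, and $\eta = c = \tfrac{11}{12}$, noting that $\eta \in [\delta,1]$ since $\tfrac{11}{12} > 1/e$. This yields
\[
\Pr[U_1 \ge c\,n] = \Pr\Big[\textstyle\sum_{k=1}^n B_k \ge c\,n\Big] \le e^{-2n\,(c - 1/e)^2},
\]
and since $c - 1/e = \tfrac{11}{12} - 1/e$ is a positive constant, the right-hand side is $e^{-\Omega(n)}$, far below $1/n^{\Omega(1)}$. Hence $U_1 < c\,n$ with high probability.

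I do not anticipate a serious obstacle, since the target $c\,n \approx 0.92\,n$ vastly exceeds the expected number of empty bins $\approx n/e \approx 0.37\,n$, leaving ample slack for concentration. The only points requiring genuine care are (i) verifying that the per-bin empty probability stays strictly below $1/e$, so that Lemma~\ref{lem:constant_accept} applies with a constant $\delta$ comfortably bounded away from $c$, and (ii) confirming that the $\msg{\texttt{prompt}}$/$\msg{\texttt{acknowledge}}$/$\msg{\texttt{invite}}$/$\msg{\texttt{matched}}$ handshake matches \emph{every} non-empty bin, so that $U_1$ is exactly the number of empty bins rather than something larger.
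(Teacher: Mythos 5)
Your proof is correct and follows essentially the same route as the paper: model phase~1 as $n$ balls into $n$ bins, bound the per-bin empty probability by $e^{-1}$, apply Lemma~\ref{lem:constant_accept} to get the negative-correlation hypothesis, and conclude via Lemma~\ref{lem:Chernoff_negative}. The only (immaterial) difference is that you take $\eta = c = \tfrac{11}{12}$ directly, whereas the paper uses $\eta = \tfrac{1}{2}$ and then notes $\tfrac{n}{2} < c\,n$; both give the same exponential tail.
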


\begin{proof}
Note that in each phase, an unmatched right node will be matched at the end of the phase if it receives an $\msg{\texttt{invite}}$ message from a left node.
Hence, to bound the number of unmatched right nodes at the end of phase $1$, we bound the probability a given right node does not receive an $\msg{\texttt{invite}}$ message.  
We use the balls-into-bins analogy and observe there are $n$ balls that are being placed into $n$ bins uniformly at random. 
The probability that a bin never receives a ball is $(1-\frac{1}{n})^n \leq e^{-1}.$
Lemma \ref{lem:constant_accept} shows that 
\begin{align*}
    \Pr\Big[ \bigwedge_{k=1}^n (B_k = 1)\Big] \le e^{-|S|}.
\end{align*}
Now, we can use Lemma \ref{lem:Chernoff_negative} with $d = n, \delta = e^{-1}$ and $\eta = \frac{1}{2}$ to derive
$\Pr\Big[U_1 \geq \frac{n}{2}\Big] \leq e^{-2n(1/2-e^{-1})^2}$, which is at most $1/n^2$ for sufficiently large $n$. 
Hence, with high probability, $U_1$ is less than $\frac{n}{2}$, which is less than $c\cdot n$. 
\end{proof}

\begin{lemma}\label{lem:phasemiddle}
With high probability, $U_i < c^i n$ for every $1<i \leq T$.
\end{lemma}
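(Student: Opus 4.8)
The plan is to prove the statement by induction on $i$, taking Lemma~\ref{lem:phase1} as the base case (the case $i-1=1$), bounding the \emph{conditional} failure probability of each phase by $1/n^2$, and then union-bounding over the at most $T=O(\log n)$ phases. For the inductive step, fix $i$ with $1<i\le T$ and condition on the entire history of phases $1,\dots,i-1$, i.e.\ on the exact set of unmatched nodes, assuming $u:=U_{i-1}<c^{i-1}n$. Since matching only ever removes unmatched nodes, if $u<c^i n$ then $U_i\le u<c^i n$ holds deterministically; so I may assume $c^i n\le u<c^{i-1}n$. Conditioned on this history, the port choices made in phase $i$ are fresh, independent randomness, so phase $i$ is a clean experiment with $u$ unmatched left nodes, each sending $s:=\lfloor c^{-(i-1)}\rfloor$ prompts to uniformly random ports (with replacement), and $u$ unmatched right nodes.

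The crux is a reduction to a clean balls-into-bins process whose bins are the $u$ unmatched right nodes. Because each port leads to exactly one right node under the fixed (adversarial) bijection, a single uniform prompt hits an unmatched right node with probability $u/n$, so a given left node $\ell$ receives at least one acknowledgment — and hence sends an invite — with probability exactly $p:=1-(1-u/n)^s$, independently across left nodes. Moreover, conditioned on sending an invite, symmetry among the $u$ unmatched right nodes (all equally likely under uniform port sampling, regardless of the port assignment) shows that $\ell$'s invite lands on each of them with the same probability. Hence $\ell$ invites each fixed unmatched right node with probability $p/u$ and invites no one with probability $1-p$, these events being independent across the $u$ left nodes; a right node is unmatched at the end of the phase iff it receives no invite.

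Let $B_k$ indicate that right node (bin) $k$ receives no invite, so $U_i=\sum_k B_k$. By independence of the left nodes and since each invite targets a single bin, for any set $S$ of bins one has $\Pr\!\big[\bigwedge_{k\in S}B_k=1\big]=(1-p|S|/u)^u\le e^{-p|S|}$; in particular $\Pr[B_k=1]\le e^{-p}=:\delta$, matching the conclusion of Lemma~\ref{lem:constant_accept}. It remains to check that $\delta$ is a constant strictly below $c=\tfrac{11}{12}$. Using $u/n\ge c^i$ together with the elementary fact $y\lfloor 1/y\rfloor>\tfrac12$ for all $y\in(0,1)$, I get $s\cdot(u/n)\ge \lfloor c^{-(i-1)}\rfloor\, c^i=c\,\big(c^{i-1}\lfloor c^{-(i-1)}\rfloor\big)>c/2$, so $(1-u/n)^s\le e^{-s\,u/n}<e^{-c/2}$ and therefore $\delta=e^{-p}<e^{-(1-e^{-c/2})}<0.70<c$, uniformly over all phases $i$.

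Now fix any constant $\eta$ with $\delta<\eta<c$ (for instance $\eta=0.8$). Applying the negatively-correlated Chernoff bound of Lemma~\ref{lem:Chernoff_negative} with $d=u$, the indicators $B_k$, this $\delta$, and this $\eta$ yields $\Pr[U_i\ge \eta u]\le e^{-2u(\eta-\delta)^2}$. Since $\eta<c$ and $u<c^{i-1}n$, the threshold obeys $\eta u<c u<c^i n$, so the event $U_i\ge c^i n$ implies $U_i\ge\eta u$. The exponent $2u(\eta-\delta)^2=\Omega(u)=\Omega(c^i n)$, and because $T$ is chosen so that $c^i n=\Omega(\log n)$ for every $i\le T$ with a sufficiently large hidden constant, this probability is at most $1/n^2$. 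Thus $\Pr[U_i\ge c^i n\mid U_{i-1}<c^{i-1}n]\le 1/n^2$; summing these conditional failure probabilities over the at most $T=O(\log n)$ phases and combining with Lemma~\ref{lem:phase1} gives $U_i<c^i n$ simultaneously for all $1<i\le T$ with high probability. I expect the main obstacle to be the symmetry reduction of the second paragraph — making rigorous that invites are placed uniformly and independently despite the adversarial port assignment and the prompts wasted on already-matched right nodes — together with verifying, uniformly in $i$, that $\delta=e^{-p}$ stays a constant below $c$, which rests on the floor estimate $y\lfloor 1/y\rfloor>\tfrac12$.
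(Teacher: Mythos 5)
Your proof is correct, but it takes a genuinely different route through the inductive step than the paper does. The paper's argument is two-layered: it isolates the set $S_i$ of unmatched left nodes that discover \emph{exactly one} unmatched right node (so that their invite is forced), proves $\Pr[v\in S_i]>c\,e^{-2}$, applies a first Chernoff bound to get $|S_i|\ge \tfrac34 c e^{-2}U_{i-1}$ w.h.p., and only then bounds the probability that a bin is missed by every node of $S_i$, obtaining $\delta=e^{-\frac34 c e^{-2}}\approx 0.911$ and invoking Lemma~\ref{lem:constant_accept} to justify the negative-correlation hypothesis of Lemma~\ref{lem:Chernoff_negative} with $\eta=c$. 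You instead handle all invites at once via an exchangeability argument: since the discovered set of unmatched right nodes is symmetric under permutations of those $u$ nodes and the invite is uniform on it, each left node invites each fixed bin with probability exactly $p/u$ where $p=1-(1-u/n)^s$, and independence across left nodes gives the exact joint bound $\Pr[\bigwedge_{k\in S}B_k=1]=(1-p|S|/u)^u\le e^{-p|S|}$, making Lemma~\ref{lem:constant_accept} unnecessary. Your floor estimate $y\lfloor 1/y\rfloor>\tfrac12$ is valid and yields $\delta<0.70$, a far more comfortable margin below $c=\tfrac{11}{12}$ than the paper's $\delta\approx 0.911$, which lets you pick an intermediate $\eta=0.8$ rather than pushing $\eta$ all the way to $c$. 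What the paper's ``exactly one acknowledgment'' device buys is precisely the avoidance of your symmetry reduction (the invite target is deterministic for nodes in $S_i$), at the cost of an extra concentration step and a worse constant; what your approach buys is a one-shot, tighter, and arguably cleaner analysis. Both proofs share the same outer skeleton (induction, conditioning on $c^i n\le U_{i-1}<c^{i-1}n$, the negatively-correlated Chernoff bound, and the union bound over $T$ phases), and both rely on $c^T n=\Omega(\log n)$ with a sufficiently large constant to reach the $1/n^2$ failure probability.
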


\begin{proof}
We will show by induction that $U_i < c^i n$ (w.h.p.), assuming that the statement holds for $i-1$.
This suffices, because we can take a union over the $T=O(\log n)$ phases.

Note that the inductive basis ($i=1$) follows from Lemma~\ref{lem:phase1}. 
Now, condition on the event that 
\begin{align} \label{eq:condition_up}
    U_{i-1}< c^{i-1}n, 
\end{align}
we show that $U_i < c^i n$.
Note that since $U_i \leq U_{i-1}$, if $U_{i-1}<c^i n$, then we are done. 
Therefore, for the rest of the proof, we also condition on the event that 
\begin{align}\label{eq:condition_low}
U_{i-1} \geq c^i n.
\end{align}

We say that a left node $v$ \emph{discovers} a right node $u$ if $v$ receives an $\msg{\texttt{acknowledge}}$ message from $u$. 
Let $S_i$ be the set of unmatched left nodes that discovers \emph{exactly} one unmatched right node.  
Let $A_i$ be the size of $S_i$. 
We show that, with high probability, 
\begin{align}\label{eq: A_i}
A_i \geq \frac{3}{4} \cdot c \cdot e^{-2}  \cdot U_{i-1}.
\end{align}
Consider a left node $v$. 
\begin{align*}
    \Pr[v \in S_i]
     & = \lfloor1/c^{i-1}\rfloor 
     \left(\frac{U_{i-1}}{n}\right)
     \left(1-\frac{U_{i-1}}{n}\right)^{\lfloor1/c^{i-1}\rfloor -1} 
     \\
     \ann{by \eqref{eq:condition_low} and \eqref{eq:condition_up}}
     & > \lfloor1/c^{i-1}\rfloor
     c^i 
     \left(1-c^{i-1}\right)^{\lfloor1/c^{i-1}\rfloor-1} \\
     & \geq (1/c^{i-1}-1)  c^i
      \left(1-c^{i-1}\right)^{\lfloor(1/c)^{i-1}\rfloor-1}\\
    & = c\left(1-c^{i-1}\right)^{\lfloor(1/c)^{i-1}\rfloor} \\
    & > c \cdot e^{-2}. 
  \end{align*}
Hence, $\EE\lt[A_i\rt] \geq c\cdot e^{-2} \cdot U_{i-1}.$ 
It follows from \eqref{eq:condition_low} and  $i \leq sT = \lceil\log_{\frac{1}{c}} n\rceil-\Theta(\log \log n)$ that $\EE\lt[A_i\rt]$ is $\Omega(\log n)$. 
Note that $A_i$ can be expressed as the sum of some  independent indicator random variables, since the left nodes select the ports to send the $\msg{\texttt{prompt}}$ messages independently. 
Consequently, we can apply a standard Chernoff's Bound to derive \eqref{eq: A_i}. 

For the remainder of the proof, we condition on the event that \eqref{eq: A_i} holds.
We observe that an unmatched right node $u$ that is discovered by a node $v$ in $S_i$ must receive an $\msg{\texttt{invite}}$ message from $v$, and hence $u$ will be matched (not necessarily to $v$) at the end of the phase. 
Hence, for a right node $u$ to remain unmatched at the of the phase, it must be the case that $u$ is not discovered by any node in $S_i$. 
This implies that the probability that a right node $u$ is not matched is at most the probability that it is not discovered by any node in $S_i$, which is 
$\left (1-\frac{1}{U_{i-1}}\right)^{A_i} \leq e^{-\frac{A_i}{U_{i-1}}}$.
It follows from \eqref{eq: A_i} that this is at most $e^{-\frac{3}{4}\cdot c \cdot e^{-2}}.$

Similarly as in the proof of Lemma~\ref{lem:phase1}, we use Lemma~\ref{lem:constant_accept} and Lemma~\ref{lem:Chernoff_negative} with $d = U_{i-1}, \delta = e^{-\frac{3}{4}\cdot c \cdot e^{-2}}$ and $  \eta = c$ to derive that
  \begin{align*}
    \Pr\Big[U_{i} \geq c \cdot U_{i-1}\Big] 
    \leq
        e^{-2U_{i-1}(c-\delta)^2}\\ 
    \ann{by \eqref{eq:condition_low}}
    \leq 
        e^{-2\cdot c^i n (c-\delta)^2}.
  \end{align*} 
When $i \le T$, we have $\Pr\Big[U_{i} \geq c \cdot U_{i-1}\Big]<\frac{1}{n^2}$ for sufficiently large $n$.
Hence, with high probability, $U_i < c \cdot U_{i-1}$. It follows from \eqref{eq:condition_up} that $U_i < c^i n$. 
 \end{proof}

\begin{lemma}\label{claim:phaselast}
After phase $T$, a perfect matching for the remaining unmatched nodes will be constructed in $O(\log n)$ phases with high probability.
\end{lemma}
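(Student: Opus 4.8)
The plan is to condition on the high-probability event supplied by Lemma~\ref{lem:phasemiddle} and then observe that the rest of the execution is essentially deterministic. First I would pin down the algebra: since $c=\frac{11}{12}$ is constant and $T=\lceil\log_{\frac{1}{c}}n\rceil-\Theta(\log\log n)$, the choice of $T$ is calibrated precisely so that $c^{T}n=O(\log n)$. Indeed, $c^{\log_{1/c}n}\,n=\Theta(1)$, and subtracting $\Theta(\log\log n)$ from the exponent inflates this by a factor $(1/c)^{\Theta(\log\log n)}=\polylog(n)$, whose exponent is controlled by the hidden constant; choosing that constant appropriately keeps the product $O(\log n)$. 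By Lemma~\ref{lem:phasemiddle} we thus have, with high probability, $U_T<c^{T}n=O(\log n)$. I set $m:=U_T$ and condition on the event $m=O(\log n)$ for the remainder of the argument; crucially, no further probabilistic claim will be needed.

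Next I would record two structural invariants maintained throughout Stage~2. First, the number of unmatched left nodes always equals the number of unmatched right nodes: in each phase every $\msg{\texttt{matched}}$ message pairs exactly one left node (which directed its single $\msg{\texttt{invite}}$ to that responder) with exactly one right node (which selects exactly one of the invites it received), so each phase removes the same number of nodes from each side, preserving the balance that holds initially. Second, by the notification and update step in line~\ref{line:stage2notify}, the set $\uport_v$ of every unmatched node $v$ is, at the start of each phase, exactly the set of ports leading to the currently unmatched opposite-side nodes; in particular every $\msg{\texttt{invite}}$ sent in Stage~2 targets a genuinely unmatched right node, so matched nodes never absorb invites and stall progress.

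The key step is the per-phase progress guarantee: in every phase $i\ge T+2$ with $U_{i-1}\ge 1$, at least one new match is formed, hence $U_i\le U_{i-1}-1$. This is immediate from the invariants — by balance there is at least one unmatched left node, it selects some port in $\uport_v$ and sends an $\msg{\texttt{invite}}$ to an unmatched right node, so the set of right nodes receiving an invite is nonempty, and each such right node responds to exactly one invite with a $\msg{\texttt{matched}}$ message, creating at least one matched pair. Since $U_i$ therefore strictly decreases while positive, the loop terminates after at most $m=U_T$ phases, at which point no unmatched nodes remain on either side, yielding a perfect matching. Conditioned on the high-probability event $m=O(\log n)$, this is $O(\log n)$ phases, proving the lemma.

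I expect the combinatorial heart of the argument to be entirely robust: a single guaranteed match per phase already delivers the $O(\log n)$ bound, so \emph{no} concentration inequality or balls-into-bins estimate is required here (in contrast to Lemmas~\ref{lem:phase1} and~\ref{lem:phasemiddle}). The only genuinely delicate points are the bookkeeping for the second invariant — verifying that $\uport_v$ is correctly pruned so that invites can never be wasted on already-matched nodes — and the constant-chasing needed to certify that the chosen $T$ gives $c^{T}n=O(\log n)$; both are routine but must be stated carefully, since a failure of the pruning invariant is the only way the progress guarantee could break down.
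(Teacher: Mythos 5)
Your proposal is correct, but it takes a genuinely different route from the paper's. Both proofs begin identically, by conditioning on the event $U_T=O(\log n)$ supplied by Lemma~\ref{lem:phasemiddle}. From there, the paper runs a second probabilistic argument: treating Stage~2 as balls-into-bins with $U_{i-1}$ balls and $U_{i-1}$ bins (which is valid precisely because $\uport_v$ is correctly maintained), it shows $\EE\lt[U_i \mid U_{i-1}\rt]\le e^{-1}U_{i-1}$, iterates to get $\EE\lt[U_{T+j}\rt]\le U_T e^{-j}$, and applies Markov's inequality with $j=\Theta(\log n)$ to conclude that no node remains unmatched with high probability. You instead observe that, given the $\uport_v$ invariant, every Stage-2 phase with at least one unmatched node deterministically produces at least one new matched pair (some unmatched left node sends an $\msg{\texttt{invite}}$ to a genuinely unmatched right node, which must respond to exactly one invite), so the loop terminates within $U_T=O(\log n)$ phases with no further randomness. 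Your argument is more elementary — it needs no concentration, no conditional-expectation computation, and no Markov step — and it yields a deterministic termination guarantee conditioned on the Stage-1 event, which is at least as strong as the paper's conclusion. The paper's expectation-based analysis captures the (morally true) geometric decay of $U_i$ in Stage~2, but since Markov's inequality still forces $\Theta(\log n)$ phases to drive the failure probability to $1/\mathrm{poly}(n)$, it buys nothing asymptotically here. Your identification of the $\uport_v$ pruning invariant as the load-bearing step is accurate; it is the same fact the paper implicitly uses to justify that all $U_{i-1}$ bins are unmatched right nodes.
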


\begin{proof}
From Lemma~\ref{lem:phasemiddle}, we have $U_T=O(\log n)$ with high probability.
For the remainder of the proof, we condition on the event $U_T=O(\log n)$.
Let $U^*$ be the set of unmatched left nodes at the end of phase $T$. 
At each phase of Stage $2$ of the algorithm, every unmatched right node knows the ports that lead to $U^*$ and vice versa. 
Let $\EE\lt[U_i \md| U_{i-1}\rt]$ be the expected number of unmatched nodes at the end of phase $i$ given that the number of unmatched nodes at the beginning of phase $i$ is $U_{i-1}$.

     Let $i>T$.  
     Since each unmatched left node knows the port that connects to the unmatched right nodes, each unmatched left node chooses $1$ out of $U_{i-1}$ ports randomly to send a message.
     Hence, there are $U_{i-1}$ balls that are placed uniformly random into $U_{i-1}$ bins. 
     Therefore, the probability that a given right bin does not get a ball is at most $\Big(1-\frac{1}{U_{i-1}}\Big)^{U_{i-1}} \leq e^{-1}.$
     By linearity of expectations, we have $E[U_i \ | \ U_{i-1}]\leq e^{-1}\cdot U_{i-1}$, and thus 
    \begin{align*}
        \EE\lt[U_i\rt] 
        = 
            \EE\lt[\EE\lt[U_i \ \md| \ U_{i-1}\rt]\rt] 
        \leq 
           e^{-1}\cdot\EE\lt[U_{i-1}\rt].
    \end{align*} 
   Hence, for $j\geq 1$, 
   \begin{align*}
        \EE\lt[U_{T+j}\rt] 
        \leq  
            \EE\lt[U_T\rt](e^{-1})^j 
        \leq 
            U_T(e^{-1})^j. 
    \end{align*} 
By assumption, we have $U_T  \leq c\log n$ for some constant $c>0$. 
Let $j = 4 \log n$ and $f=T+j$.  we have 
\begin{align*}
    \EE\lt[U_{f}\rt] 
    \leq  
        \frac{c\log n}{n^4}
    \leq 
        \frac{c}{n^2}
\end{align*} 
Using Markov's inequality, we can bound the number of unmatched nodes after phase $f$:
\begin{align*}
   \Pr(U_f \geq 1)
   \leq 
    \EE\lt[U_{f}\rt]
   \leq 
    \frac{c}{n^2}.
\end{align*} 
Hence, with high probability, after $f = O(\log n)$ phases, the algorithm terminates and a perfect matching is obtained.
\end{proof}

\begin{lemma}
    The number of pulses of Algorithm \ref{alg:bipartite} is $O(n \log n)$.
\end{lemma}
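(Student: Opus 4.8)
The plan is to bound the pulses contributed by each phase separately and sum these contributions, crucially conditioning throughout on the high-probability events established in Lemmas~\ref{lem:phase1}, \ref{lem:phasemiddle}, and~\ref{claim:phaselast}. First I would condition on the event $\cG$ that $U_i < c^i n$ holds for every $1 \le i \le T$ \emph{and} that Stage~2 terminates within $O(\log n)$ phases; by the three lemmas together with a union bound over the $T = O(\log n)$ phases, $\cG$ occurs with high probability, so it suffices to show the pulse bound holds deterministically under $\cG$. This conditioning is exactly what makes the argument go through: without the shrinking guarantee on $U_{i-1}$, the $\msg{\texttt{prompt}}$ pulses in the later phases of Stage~1 could grow as large as $\Theta(n^2)$.

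For Stage~1, the key observation is a geometric cancellation between the decreasing number of unmatched nodes and the increasing number of pulses per node. In phase $i$, each of the at most $U_{i-1} < c^{i-1} n$ unmatched left nodes sends $\lfloor 1/c^{i-1}\rfloor \le c^{-(i-1)}$ prompt pulses, so the number of $\msg{\texttt{prompt}}$ pulses is at most $c^{i-1} n \cdot c^{-(i-1)} = n$. The number of $\msg{\texttt{acknowledge}}$ pulses is at most the number of prompts received, hence at most $n$, while each unmatched left (resp.\ right) node sends at most one $\msg{\texttt{invite}}$ (resp.\ $\msg{\texttt{matched}}$) pulse, contributing at most $U_{i-1} \le n$ pulses each. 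Thus each of the $T$ phases costs $O(n)$ pulses, for a Stage~1 total of $O(nT) = O(n\log n)$.

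For Stage~2, I would treat phase $T+1$ separately from the subsequent phases. Since $\cG$ gives $U_T = O(\log n)$, in phase $T+1$ each of the $U_T$ unmatched left nodes probes all $n$ right nodes with a prompt, and each unmatched right node replies with an acknowledge only to the unmatched left nodes, so this phase costs $U_T \cdot n + U_T^2 = O(n\log n)$ pulses. In each of the remaining $O(\log n)$ phases, only the $O(\log n)$ surviving nodes communicate, and exclusively over the ports recorded in $\uport_v$, whose size never exceeds $U_T = O(\log n)$: the $\msg{\texttt{invite}}$ and $\msg{\texttt{matched}}$ pulses number $O(\log n)$, and the notification step of line~\ref{line:stage2notify} sends at most $|\uport_v| = O(\log n)$ pulses from each of the $O(\log n)$ newly matched nodes, giving $O(\log^2 n)$ pulses per phase and $O(\log^3 n)$ over all these phases.

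Summing the three contributions yields $O(n\log n) + O(n\log n) + O(\log^3 n) = O(n\log n)$, as claimed. I expect the only genuinely delicate point to be the conditioning in the first step: the per-phase bound in Stage~1 is meaningful only on the event $\cG$, and one must verify that the complementary event—on which the pulse count could be $\Omega(n^2)$—has probability at most $n^{-\Omega(1)}$, so that the ``with high probability'' qualifier of Theorem~\ref{thm:up_bipartite} is justified. Once the cancellation $U_{i-1}\lfloor 1/c^{i-1}\rfloor = O(n)$ is isolated, the remaining counting is routine.
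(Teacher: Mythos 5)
Your proposal is correct and follows essentially the same route as the paper's proof: the geometric cancellation $U_{i-1}\cdot\lfloor 1/c^{i-1}\rfloor = O(n)$ per Stage-1 phase, the $O(n\log n)$ cost of phase $T+1$, and the $O(\log^2 n)$-per-phase notification cost in the remainder of Stage~2. You are slightly more explicit than the paper about conditioning on the high-probability events and about counting all message types, but the decomposition and the key observations are identical.
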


\begin{proof}
Let $i\leq T$. Each unmatched node at phase $i$ sent $O(\frac{1}{c^{i-1}})$ pulses. By Lemma~\ref{lem:phase1} and Lemma~\ref{lem:phasemiddle}, the number of unmatched nodes at phase $i$ is at most $c^in$. As a result, $O(n)$ pulses are sent in phase $i$. 
This gives $O(n\log n)$ pulse complexity for Stage $1$ of the algorithm. 
For $i=T+1$, each unmatched left nodes sends $O(n)$ pulses,  which amounts to $O(n\log n)$ pulses, since the number of unmatched left nodes is $O(\log n)$.  
Let $U^*$ be the set of unmatched left nodes at the end of phase $T$. 
For $i \geq T$, at the end of each phase, each newly matched right node sends pulses to notify the nodes in $U^*$ that it has been matched. This generates $O(\log^2 n)$ messages since the number of unmatched nodes at each phase is bounded by $|U^*| = O(\log n)$. 
Hence, the pulse complexity for Stage $2$ of the algorithm is $O(n \log n)$ as well. 
\end{proof}
}

\onlyLong{\subsection{Experimental Results} \label{sec:simulation}
In Theorem~\ref{thm:up_bipartite}, we have that the running time of the randomized algorithm is logarithmic in the number of nodes. 
That is, there exists a constant $d$ such that Algorithm \ref{alg:bipartite} terminates in $d\log n$ rounds. 
In the following, we show that the constant $d$ is small. 
We implemented Algorithm \ref{alg:bipartite} and tested the performance of the algorithm on complete bipartite graphs of $n$ nodes, for $n = 2^i$, where $i=1, \ldots, 20$. 
For each case, we performed 1000 trials, and computed the mean number of phases taken by the algorithm to terminate. Each phase consists of at most $4$ rounds of communications. 
Figure \ref{fig:simulation} presents the experimental results, which indicate that the algorithm terminates in $\log_2 n$ phases where $n$ is the number of nodes. 

\begin{figure}[th]
  \centering
    \includegraphics[scale=0.3]{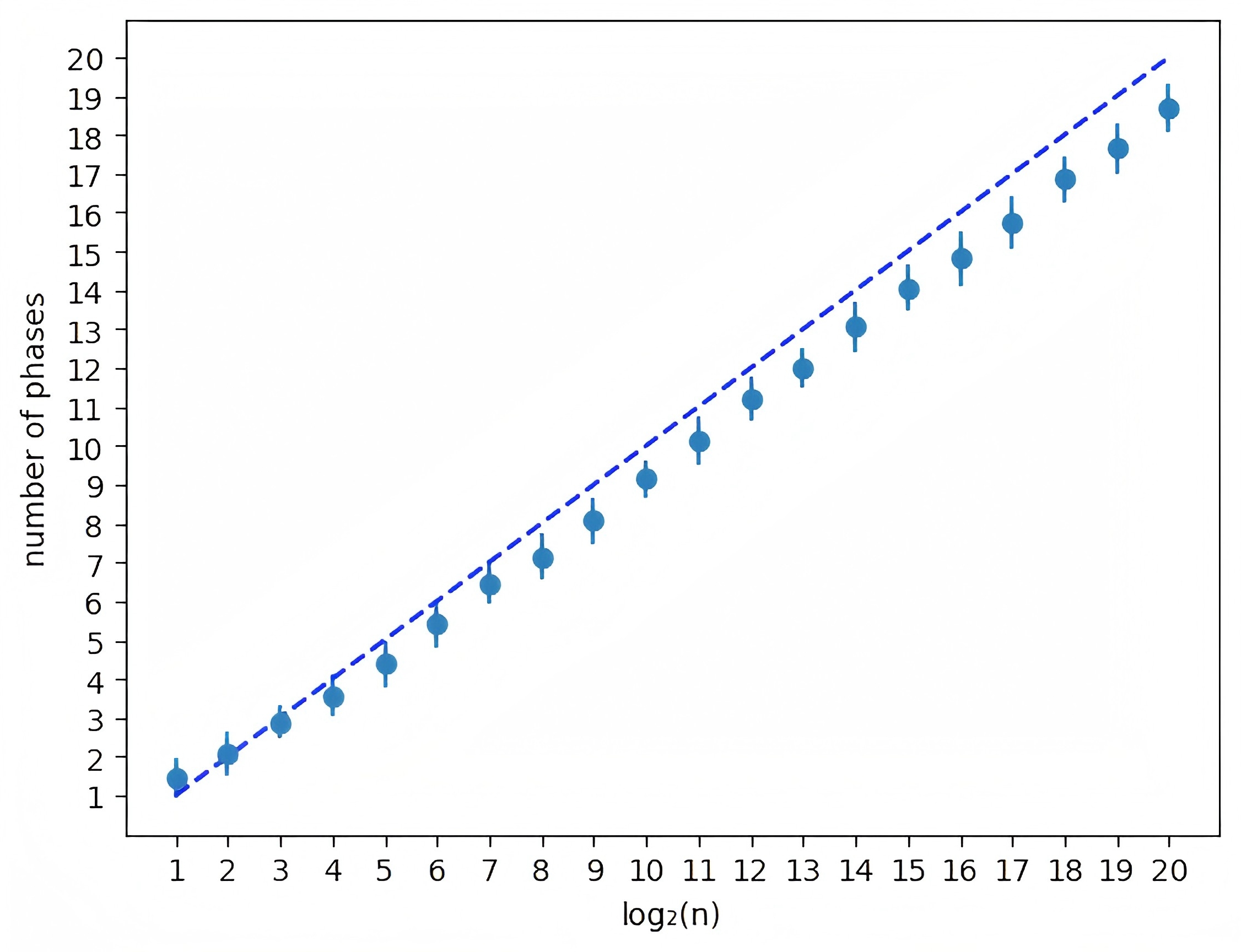} 
  \caption{Experimental results of the execution of Algorithm \ref{alg:bipartite}. The x-axis represents the logarithmic base $2$ of the number of nodes, and the y-axis represents the number of phases. Each phase consists of at most $4$ communication rounds. The points represent the mean number of phases of 1000 trials in each case, and the error bars indicate the standard deviation. The dashed line is the identity line $y=x$.
  }\label{fig:simulation}
\end{figure}

 }
\section{More Efficient Algorithms under the $\kt_1$ Assumption} \label{sec:kt1}
In this section, we present two deterministic algorithms that leverage the knowledge of their neighbors' IDs to not only break the lower bound of Section~\ref{sec:lb}, but also improve over the performance of the randomized algorithm in the port numbering model given in Section~\ref{sec:up_bipartite}.

Here, we assume the standard $\kt_1$ model, where nodes have unique integer IDs, and each node starts out knowing the IDs of its neighbors. 
We say that node $u \in L$ (resp. $R$) has \emph{rank $i$}, if its ID is the $i$-th smallest among all nodes in $L$ (resp. $R$). 
Note that due to $\kt_1$ assumption, each node in $L$ can locally compute the rank of each node in $R$ and vice versa. 
However, each node is unaware of its own rank. 
We remark that if each node knew its own rank (which is the case if we assume $\kt_2$~\cite{AGPV88}, where each node knows the IDs of all nodes at most two hops away from it), then all nodes can output a matching instantly without any communication. 
Returning to our $\kt_1$ setting, this motivates the algorithmic design strategy of informing each node of its own rank. 
In the $\congest$ model, where each message can carry $O(\log n)$ bit, it takes just two rounds and $O(n)$ messages to inform each left node of its rank. Concretely, in the first round, each left node sends a message to the smallest rank right node $r_0$. In the second round, $r_0$ sends one message to each of the left nodes to inform each of them of their rank. 
On the other hand, in our link activation model, where we can only signal a pulse, a naive implementation of the above approach would take $O(\log n)$ rounds and $O(n \log n)$ pulses, which would not improve over the port numbering algorithm of Section~\ref{sec:up_bipartite}: 
When using the convention that a $0$ bit sent over a link corresponds to non-activation of the link, and a $1$ bit corresponds to a pulse,
in round $i$, node $r_0$ sends the $i$-th bit of the binary representation of $j$ to the left node of rank $j$. This can be completed in $O(\log n)$ rounds. %
\footnote{An alternate approach for each left node to learn its own rank is by performing the standard binary search. That is, each left node $l$ initially guesses its rank $i$ and sends a pulse to the right node $r$ of rank $i$. In two rounds, $r$ can inform $l$ if its guess is equal, higher, or lower than the correct value. \onlyLong{The standard binary search analysis implies that each left node takes $O(\log n)$ rounds to find its rank, and hence, $O(n \log n)$ pulses.}}
\onlyLong{
\subsection{Algorithm $\fastIntervalMatching$} \label{sec:kt1_simple}
}
We now describe the algorithm $\fastIntervalMatching$, which achieves optimal pulse complexity of $O(n)$ and terminates in $O(\log n)$ rounds. 

\medskip\noindent\textbf{Interval Construction:} We describe a procedure called $\intervalConstruction$ that partitions the nodes in $L$ and $R$ into equal size intervals. As we will instantiate this method for subgraphs of different sizes, we state it under the assumption that $|L|=|R|=N$, where $N$ is any positive integer that is greater than $2$.
We use $l_0,\dots,l_{N-1}$ to denote the nodes in $L$ ordered in increasing order of their IDs, and analogously define the order $r_0,\dots,r_{N-1}$ of the nodes in $R$. 
Note that this means the rank of $l_i$ as well as $r_i$ is $i$. 
Each node in $L$ activates a link to the node $r_0$, who has the smallest ID in $R$. 
We define $s = \lt\lceil \log N \rt\rceil$.

\emph{Helper Inform Step:} $r_0$ recruits a set $H$ of $s$ \emph{helpers}, by contacting the nodes with the $s$-smallest ranks in $L$, and informing these nodes of their respective rank in $L$. 
In more detail, $r_0$ executes $O(\log s)$ rounds, where in each round $i$, $r_0$ sends the $i$-th bit of rank $j$ to $l_j$, for each $j=0, \ldots, s-1$ in parallel. 

\emph{Partition Step:}
Our next goal will be to split $R$ into $N/s$ intervals,\footnote{To simplify the notation, we assume that $N/s$ is an integer.} denoted by $R_0,\dots,R_{N/s-1}$.
For every $i \in [0,N/s-2]$, interval $R_i$ contains exactly $s$ nodes, whereby 
$R_{N/s-1}$ contains at most $s-1$ nodes.
We will obtain the intervals $L_0,\dots,L_{N/s-1}$ by partitioning $L$ analogously.
To implement this partitioning process, we make use of the helper nodes to identify every node $r_{j\cdot s }$, for $j \in \set{0,\dots,n/s-1}$, as the \emph{leader of interval $R_j$}.
To this end, we need to make sure that each of these interval leaders learns its own rank.
Concretely, for every $r_{j\cdot s}$, each helper $l_i$ sends the $i$-th bit of $r_{j\cdot s}$' rank to $r_{j\cdot s}$ (where $0$ and $1$ bits are encoded as pulses or absence of pulses, as described above).
Note that all helpers perform this operation in parallel, and thus this requires just $1$ round.
Afterwards, each interval leader $r_{j\cdot s}$ activates a link to $l_{j\cdot s}$.
Since the \emph{leader of $L_j$}, which is $l_{j\cdot s}$, knows the rank of $r_{j\cdot s}$, a pulse is sufficient for it to learn its own rank. 
Then, we add all edges $\set{l_{j\cdot s},r_{j\cdot s}}$ to the matching, for each $j$. 

\emph{Interval Inform Step:} In one additional round, the leader of $L_j$ sends a pulse to each node in interval  $R_j$, and similarly the leader of $R_j$ sends a pulse to the nodes in the $L_j$, for every $j$.
This completes the $\intervalConstruction$ procedure.

\onlyLong{\medskip\noindent\textbf{Remark:} Strictly speaking, algorithm $\fastIntervalMatching$ does not require the Interval Inform Step, as we will see in the analysis below. We nevertheless include this step, as it does not change the asymptotic complexity bounds, and our second algorithm (see Sec.~\ref{sec:kt1_recursive}) is based on a recursive application of $\intervalConstruction$ procedure, in which case, the Interval Inform Step turns out to be necessary.  
}

\medskip\noindent\textbf{Parallel Interval Matching:}
Finally, we compute a matching in parallel on each interval: 
Each leader in $R_j$, i.e., node $r_{j.s}$, sends a pulse to the nodes in $L_j$ sequentially, by activating one link per round. 
This allows each node in $L_j$ to deduce its rank relative to the other nodes in $L_j$.
In particular, node $l_{j\cdot s+ k}$ will have its link activated after $k$ rounds, prompting it to activate the link to $r_{j\cdot s + k}$, to which it becomes matched.
Since there are $s$ nodes per interval this completes in $O\lt( s \rt)$ rounds.
\onlyLong{
\begin{lemma} \label{lem:kt1_messages}
The $\intervalConstruction$ procedure has $O(N)$ pulse complexity and completes in $O(\log \log N)$ rounds.
\end{lemma}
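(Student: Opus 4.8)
The plan is to charge the round complexity and the pulse complexity separately to the individual steps of $\intervalConstruction$, and then observe that every contribution is subsumed by the claimed bounds. Recall that the procedure consists of the initial activation of links towards $r_0$, the Helper Inform Step, the Partition Step (which itself splits into the helpers transmitting the interval leaders' ranks and the interval leaders activating links to their $L$-counterparts), and the Interval Inform Step.

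For the round complexity, I would first observe that every step except the Helper Inform Step runs in $O(1)$ rounds: the initial activation towards $r_0$ is one round, the helpers transmit all rank bits in parallel in a single round, each interval leader $r_{j\cdot s}$ activates one link in one round, and the Interval Inform Step is one round by construction. The only step taking more than constantly many rounds is the Helper Inform Step, in which $r_0$ transmits the $O(\log s)$-bit rank of each helper bit by bit; since $s = \lt\lceil \log N \rt\rceil$, this takes $O(\log s) = O(\log \log N)$ rounds. Summing, the total round complexity is $O(\log\log N)$.

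For the pulse complexity, I would account for each step in turn. The initial activation sends exactly $N$ pulses. In the Helper Inform Step, $r_0$ sends at most $s$ pulses per round over $O(\log s)$ rounds, contributing $O(s\log s) = O(\log N \cdot \log\log N)$ pulses. In the Partition Step, each of the $s$ helpers sends at most one pulse to each of the $N/s$ interval leaders, for at most $s\cdot(N/s)=N$ pulses, after which the $N/s$ interval leaders each send a single pulse, adding $N/s$ more. Finally, in the Interval Inform Step each of the $N/s$ interval leaders on each side sends at most $s$ pulses (one per member of its interval), for at most $(N/s)\cdot s = N$ pulses per side, i.e.\ $O(N)$ in total. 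Adding these contributions, the dominant terms are the several $O(N)$ summands, while the Helper Inform contribution $O(\log N \cdot \log\log N) = o(N)$ is absorbed, yielding an $O(N)$ pulse complexity.

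I do not expect a genuine obstacle here; the argument is careful bookkeeping rather than a substantive difficulty. The one point that needs attention is confirming that the Helper Inform Step dominates neither bound: it is the unique source of the $O(\log\log N)$ rounds, yet its pulse cost stays comfortably below $O(N)$. A secondary subtlety worth stating explicitly is that the parallelism of the Partition Step must be used correctly, namely that all $s$ helpers broadcast the interval leaders' rank bits \emph{simultaneously}, so that this phase costs a single round and only $O(N)$ pulses rather than $O(N\log N)$.
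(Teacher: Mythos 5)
Your proposal is correct and follows essentially the same route as the paper's proof: a step-by-step accounting showing that only the Helper Inform Step costs more than $O(1)$ rounds (namely $O(\log s)=O(\log\log N)$), and that every step's pulse cost is $O(N)$, with the Helper Inform Step contributing only $O(\log N\cdot\log\log N)$. Your bookkeeping is, if anything, slightly more explicit than the paper's (e.g., you also charge the initial $N$ activations towards $r_0$), but there is no substantive difference.
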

\begin{proof}
It is straightforward to verify that the round complexity of this procedure is $O\lt( \log s \rt) = O(\log \log N)$. 
Hence, we focus on the pulse complexity analysis. 
Recall that each interval contains $s:=\lceil \log N \rceil$ nodes and there are $N/s$ left (and also the same number of right) intervals. 
In the \textsl{Helper Inform Step}, the leader $r_0$ informs each of the helper nodes of their respective ranks. Each rank can be encoded as a $\log_2 s$ bit string since the highest rank of these nodes is $s$.   
Hence, $r_0$ sends at most $s$ pulses per round, for a consecutive of $\log_2 s$ rounds, which add up to $O(\log N \cdot \log \log N)$ pulses in total. 
In the \textsl{Partition Step}, the right interval leaders are informed of their ranks. Hence, each right leader receives at most $O(s)$ pulses. Since there are $N/s$ right leaders, the pulse complexity is $O(N)$. Note that each left leader learns its own rank by receiving just one additional pulse from their respective right leaders of the same rank. (The nodes on the left who do not receive a pulse in that round know that they are not the leader of any interval.) Hence, the overall pulse complexity of the \textsl{Partition Step} is $O(N)$. 
In the \textsl{Interval Inform Step}, each node receives at most one pulse, which again yields a pulse complexity $O(N)$. 
\end{proof}

The next lemma follows directly from the $\intervalConstruction$ procedure: 
\begin{lemma} \label{lem:kt1_matched}
The $\intervalConstruction$ procedure adds $N/\log N$ edges to the matching.
Moreover, every node in $L_i$ and $R_i$ knows that it is a member of the respective interval, and knows which of its neighbors that are part of the same interval.
\end{lemma}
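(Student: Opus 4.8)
The plan is to verify both claims by directly tracing through the steps of the $\intervalConstruction$ procedure, relying throughout on the $\kt_1$ feedback mechanism: at the start of each round, every node learns the IDs—hence, by local computation, the ranks—of the neighbors that activated a link to it in the previous round. Combined with the fact that each node already knows the ranks of all its neighbors, this is the only ingredient needed, so the lemma really does ``follow directly'' from the construction.

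For the edge-count claim, I would observe that the only edges added to the matching are the leader edges $\set{l_{j\cdot s},r_{j\cdot s}}$ formed in the \emph{Partition Step}, exactly one per interval. Since the intervals $R_0,\dots,R_{N/s-1}$ partition $R$, there are $N/s = N/\lceil \log N\rceil = \Theta(N/\log N)$ of them, and since the endpoints $l_{j\cdot s}$ (resp.\ $r_{j\cdot s}$) are pairwise distinct for distinct $j$, these edges form a valid partial matching of the claimed size. Before this is meaningful, I would confirm that each endpoint knows it is a leader and learns its own rank: the right leader $r_{j\cdot s}$ receives its rank bit-by-bit from the helpers, while $l_{j\cdot s}$ deduces its rank from the single pulse it gets from $r_{j\cdot s}$—since by the protocol convention only right interval leaders are active in that round, and each contacts the left node of its own rank, a left node receiving a pulse computes the sender's rank $j\cdot s$ and infers that this is its own rank, whereas a left node receiving no pulse concludes it is not a leader.

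For the knowledge claim, I would focus on the \emph{Interval Inform Step}. The left leader $l_{j\cdot s}$, now knowing its rank $j\cdot s$ and the ranks of all right nodes, identifies exactly the $s$ right nodes of ranks $j\cdot s,\dots,j\cdot s+s-1$ constituting $R_j$ and pulses each of them; symmetrically, $r_{j\cdot s}$ pulses $L_j$. A node $r_m\in R_j$ receiving this pulse learns the sender's ID, computes the sender's rank $j\cdot s$, and thereby recovers its interval index $j=(j\cdot s)/s$, even though it never learns its own rank $m$. Having determined $j$, and knowing the ranks of all left nodes, $r_m$ can name its precise set of same-interval neighbors $L_j=\set{l_{j\cdot s},\dots,l_{j\cdot s+s-1}}$; the symmetric argument gives the statement for left nodes.

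The step I would be most careful about is this knowledge-propagation argument rather than the trivial edge count: it must be argued that every non-leader node correctly recovers its interval index purely from the identity of the single leader that pulses it. This hinges on three points acting together—(i) the $\kt_1$ feedback revealing sender IDs, (ii) each node's global knowledge of all neighbors' ranks, and (iii) the round-based protocol convention that lets a node interpret a pulse in that round as coming from its interval leader. Once these are in place, identifying the same-interval neighbors is immediate, since a node knowing its index $j$ can locally compute which ranks fall in the window $[j\cdot s,\,j\cdot s+s-1]$.
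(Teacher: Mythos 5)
Your proposal is correct and matches the paper's intent: the paper states that this lemma ``follows directly from the $\intervalConstruction$ procedure'' and gives no separate proof, and your write-up simply makes explicit the two observations that justify this (one leader edge per each of the $N/s$ intervals, and interval membership deduced from the sender's ID in the Interval Inform Step via the $\kt_1$ feedback). The subtlety you flag --- that a non-leader node recovers its interval index from the leader's identity without ever learning its own rank --- is exactly the right point to be careful about, and your handling of it is sound.
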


We are now ready to combine the previous lemmas to obtain the main result of this subsection:
}
\begin{theorem}
$\fastIntervalMatching$ computes a perfect matching in $O(\log n)$ rounds and with $O(n)$ pulse complexity, assuming the $\kt_1$ model.  
\end{theorem}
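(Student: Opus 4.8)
The plan is to assemble the theorem from the two guarantees already established for $\intervalConstruction$ (Lemma~\ref{lem:kt1_messages} and Lemma~\ref{lem:kt1_matched}) together with a direct analysis of the Parallel Interval Matching phase, and then to argue that the edges added across all phases form a perfect matching. The key structural observation I would record first is that the final matching is exactly the rank-identity matching: the Partition Step matches every interval leader $l_{j\cdot s}$ to $r_{j\cdot s}$, while Parallel Interval Matching matches $l_{j\cdot s+k}$ to $r_{j\cdot s+k}$ for the remaining offsets $k\ge 1$ inside each interval, so that the node of rank $m$ in $L$ is matched to the node of rank $m$ in $R$. Since every rank $m\in\{0,\dots,N-1\}$ is produced exactly once, and since leaders are matched only in the Partition Step whereas non-leaders are matched only during Parallel Interval Matching, no node is matched twice and no node is left unmatched, yielding a perfect matching.

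The first step I would carry out in detail is correctness of the Parallel Interval Matching phase, namely that each node identifies its partner using only local information. Fix an interval with leader $r_{j\cdot s}$, which learned its own rank in the Partition Step; because it knows the IDs, and hence ranks, of all left nodes under the $\kt_1$ assumption, it can identify the nodes $l_{j\cdot s},\dots,l_{j\cdot s+s-1}$ of its interval and activate the link to $l_{j\cdot s+k}$ in round $k$. When $l_{j\cdot s+k}$ receives this pulse, the synchronous clock reveals the within-interval offset $k$, and, by $\kt_1$, the node learns the ID of the activating leader $r_{j\cdot s}$, from which it recovers $j\cdot s$ and hence its global rank $m=j\cdot s+k$. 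It then activates the link to the right node of rank $m$, and symmetrically $r_{j\cdot s+k}$ learns the ID of the activating node and confirms the match, so both endpoints agree. This is also the point at which I would note that the Interval Inform Step is not strictly necessary, since the interval index $j$ is recovered directly from the leader's ID during this phase. The only additional care concerns the boundary interval $R_{N/s-1}$, which may contain fewer than $s$ nodes; the same argument applies, with its leader simply activating links to whichever interval members exist.

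With correctness established, the complexity bounds follow by summing over the two phases. For the round complexity, Lemma~\ref{lem:kt1_messages} bounds $\intervalConstruction$ by $O(\log\log N)$ rounds, and Parallel Interval Matching runs for $O(s)=O(\log N)$ rounds since each interval holds at most $s=\lceil\log N\rceil$ nodes; instantiating $N=n$ gives the claimed $O(\log n)$ rounds, dominated by the matching phase. For the pulse complexity, Lemma~\ref{lem:kt1_messages} contributes $O(N)$ pulses; in Parallel Interval Matching each of the $N/s$ interval leaders activates at most one link per round over $O(s)$ rounds, for $O(N)$ pulses in total, and each matched left node activates exactly one link to its partner, adding at most $N$ further pulses. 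Summing yields $O(N)=O(n)$ pulses overall.

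I expect the main obstacle to be the correctness bookkeeping rather than the complexity counting: one must verify that every node reconstructs its correct global rank from purely local data, namely the round index together with the identity of the single activating neighbor, that the two endpoints of each edge reach a consistent decision, and that the leader edges from the Partition Step dovetail cleanly with the Parallel Interval Matching edges so that no node is matched twice or left unmatched, including the short final interval. By contrast, the round and pulse counts are immediate once the per-leader and per-node activation budgets are isolated.
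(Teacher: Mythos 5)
Your proposal is correct and follows essentially the same route as the paper: invoke Lemma~\ref{lem:kt1_messages} and Lemma~\ref{lem:kt1_matched} for the $\intervalConstruction$ phase, then bound Parallel Interval Matching directly by $O(s)$ rounds and $O(s)$ pulses per interval, giving $O(\log n)$ rounds and $O(n)$ pulses in total. The extra correctness bookkeeping you supply (rank reconstruction from the round index and the leader's ID, and the observation that the Interval Inform Step is dispensable) is consistent with, and somewhat more explicit than, what the paper records.
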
 
\onlyLong{
\begin{proof}
Following Lemma~\ref{lem:kt1_messages} and Lemma~\ref{lem:kt1_matched}, we only need to show the round and pulse complexity of Parallel Interval Matching procedure. 
In this procedure, each node in $L_j$ receives one pulse from the leader in $R_j$ and sends one pulse to a node in $R_j$. 
Hence, the pulse complexity is $O(s)$ per interval and hence $O(N)$ in total. 
This procedure takes $O(s)$ rounds, hence, $O(\log n)$ round complexity.
\end{proof}
\subsection{Perfect Matching in $O(\log^{*}n\cdot\log\log n)$ Rounds} \label{sec:kt1_recursive}
}
We now employ procedure $\intervalConstruction$\onlyLong{ from Section~\ref{sec:kt1_simple}}\xspace as a building block for obtaining an exponential improvement in the time complexity, while increasing the message complexity by a factor of $\log^*n$.

\onlyLong{
Our algorithm proceeds recursively as follows. 
We first perform the $\intervalConstruction$ procedure on the original bipartite network to obtain the intervals $L_0,\ldots,L_{n/\Theta\lt( \log n \rt)}$ and $R_0,\ldots,R_{n/\Theta\lt( \log n \rt)}$, as described in Section~\ref{sec:kt1_simple}.
Then, we consider each pair $(L_i,R_i)$ as a new bipartite network $G_i^{(1)}$, and apply the $\intervalConstruction$ algorithm to $G_i^{(1)}$, in parallel for every $i \in [n/\Theta\lt( \log n \rt)]$. 
Note that Lemma~\ref{lem:kt1_matched} ensures that each node knows which interval it is part of.
We stop this recursion after a depth of $\log_2^{*}n$, which is the number of times the logarithm function must be iteratively applied before the result is less than or equal to $1$.
}
\begin{theorem} \label{lem:kt1_recursive}
There is a deterministic algorithm that computes a perfect matching in $O\lt( \log^{*}n\cdot\log\log n \rt)$ rounds and with a pulse complexity of $O\lt( n \log^{*}n \rt)$ under the $\kt_1$ assumption.
\end{theorem}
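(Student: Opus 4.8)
The plan is to apply the \intervalConstruction procedure recursively, using it both to match a constant fraction of progress at each level and to hand down cleanly structured subproblems. At the top level (level~$0$) I would run \intervalConstruction on the full network with $N_0=n$, which by Lemma~\ref{lem:kt1_matched} matches the $N_0/\lceil\log N_0\rceil$ interval leaders via the edges $\{l_{j\cdot s},r_{j\cdot s}\}$ and partitions the remaining nodes into balanced interval pairs $(L_i,R_i)$, each of size $N_1=\lceil\log N_0\rceil-1=\Theta(\log n)$, with every node aware of its own interval and of its neighbors inside it. I would then treat each pair $(L_i,R_i)$ as an independent complete bipartite instance $G_i^{(1)}$ and recurse, invoking \intervalConstruction on all level-$1$ instances in parallel. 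In general, a subproblem at level $t$ has size $N_t$ with $N_{t+1}=\Theta(\log N_t)$, so the sizes follow the iterated logarithm and shrink to a constant after $t=O(\log^*n)$ levels; at that point (the base case) each surviving instance has $O(1)$ nodes per side, which I match directly in $O(1)$ rounds and pulses.

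For correctness I would induct on the recursion depth. The partition produced by \intervalConstruction cuts $L$ and $R$ at the same rank boundaries, so every interval satisfies $|L_i|=|R_i|$; removing the two matched leaders preserves this balance, and hence each recursive instance is again a balanced complete bipartite graph. Every node is matched exactly once: either as an interval leader at some level, or in the base case. The linchpin of the argument is that the Interval Inform Step guarantees (Lemma~\ref{lem:kt1_matched}) that each node learns its interval membership and its same-interval neighbors, so each subproblem carries precisely the local information needed to launch the next recursive call --- in particular, the smallest-ID right node of $G_i^{(t)}$ can recruit helpers and compute the ranks required by \intervalConstruction from local knowledge alone.

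The complexity bounds follow by summing the per-level costs. By Lemma~\ref{lem:kt1_messages}, one invocation of \intervalConstruction on an instance of size $N$ uses $O(N)$ pulses; since the level-$t$ instances are disjoint and together contain at most $n$ of the original nodes, the total pulse cost per level is $O(n)$, and over the $O(\log^*n)$ levels this yields $O(n\log^*n)$ pulses. For the round complexity, all instances at a given level run simultaneously, so level $t$ costs $O(\log\log N_t)=O(\log\log n)$ rounds (bounding $N_t$ by the top-level size $n$), and $O(\log^*n)$ levels give the claimed $O(\log^*n\cdot\log\log n)$ rounds.

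The main obstacle I anticipate is the bookkeeping of the recursion rather than any single hard estimate: I must verify that after the leaders are matched the leftover nodes genuinely form balanced complete bipartite instances, that the node-knowledge invariant of Lemma~\ref{lem:kt1_matched} is inherited by every recursive call, and that the last, slightly smaller interval of each partition does not corrupt the size recurrence $N_{t+1}=\Theta(\log N_t)$ or the $\log^*n$ depth bound. The pulse bound rests entirely on the observation that the instances at each level are disjoint with total size at most $n$, so the linear per-instance cost of Lemma~\ref{lem:kt1_messages} sums to $O(n)$ per level instead of scaling with the number of instances.
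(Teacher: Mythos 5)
Your proposal is correct and follows essentially the same route as the paper: recursive application of \intervalConstruction, with the $O(n)$ per-level pulse bound coming from the disjointness of the level-$t$ instances, $O(\log\log n)$ rounds per level, and $O(\log^* n)$ recursion depth. The only cosmetic difference is that the paper keeps the leaders inside the recursive intervals and sums the sharper per-level round costs $O(\log^{(d+2)} n)$, while you remove the matched leaders and use the uniform $O(\log\log n)$ bound; both yield the same asymptotics.
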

\onlyLong{
\begin{proof} 
As the algorithm proceeds recursively, we call the original complete bipartite graph of $2n$ nodes the \emph{depth-$0$ network}.
Then, after the first execution of $\intervalConstruction$, we partition the network into $n/\log n$ bipartite networks $G_i^{(1)}$, which we call the \emph{depth-$1$ networks}, and observe that each depth-$1$ network comprises $O(\log n)$ nodes. 
Then, we apply $\intervalConstruction$ on each of these depth-$1$ networks, (in parallel), and the original network is now partitioned into $n/\log\log n$ depth-$2$ networks, each of $O(\log \log n)$ nodes. 
In general, for $d \geq 1$, executing $\intervalConstruction$ for each of the depth-$(d-1)$ networks, results in the partition of the network into $n/\log^{(d)}n$ depth-$d$ networks, each consisting of $O(\log^{(d)}n)$ nodes. 

\begin{claim} \label{cl:depthd}
  The execution of $\intervalConstruction$ on the depth-$d$ networks, in parallel, takes $O(\log^{(d+2)} n)$ rounds and sends $O(n)$ pulses.
  Moreover, as a result, $n/\log ^{(d+1)}n$ edges are being matched.   
\end{claim}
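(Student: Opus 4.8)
The plan is to treat Claim~\ref{cl:depthd} as a direct aggregation of Lemma~\ref{lem:kt1_messages} and Lemma~\ref{lem:kt1_matched} over all depth-$d$ networks, so the real work is bookkeeping of the iterated-logarithm sizes rather than any new probabilistic or combinatorial argument. First I would record the two size facts driving everything: by the recursive construction, each depth-$d$ network has $N_d = \Theta(\log^{(d)} n)$ nodes per side, since one application of $\intervalConstruction$ to a size-$N$ network produces intervals of size $\lceil \log N \rceil$; and there are $\Theta(n/\log^{(d)} n)$ such networks. The crucial structural observation is that the depth-$d$ networks are vertex-disjoint and together partition $L$ and $R$, so their per-side sizes sum to exactly $n$.

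For the round and pulse bounds, I would first argue that running $\intervalConstruction$ simultaneously on all depth-$d$ networks causes no interference: although the host graph is complete bipartite, $\intervalConstruction$ only ever activates links between two nodes lying in the same interval, and by Lemma~\ref{lem:kt1_matched} every node already knows which interval (i.e., which depth-$d$ network) it belongs to and which of its neighbors share it. Hence the executions are independent. For the round complexity, Lemma~\ref{lem:kt1_messages} gives $O(\log\log N_d)$ rounds per network; substituting $N_d = \Theta(\log^{(d)} n)$ yields $\log\log N_d = \Theta(\log^{(d+2)} n)$, and since the executions proceed in parallel the level finishes in the maximum over networks, namely $O(\log^{(d+2)} n)$ rounds. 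For the pulse complexity, Lemma~\ref{lem:kt1_messages} gives $O(N_d)$ pulses per network, so the total is $O\lt(\sum_i N_{d,i}\rt) = O(n)$ by the partition property, regardless of the ceiling effects that make individual interval sizes slightly unequal.

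For the number of matched edges, I would avoid summing $N_d/\log N_d$ network-by-network and instead use a counting identity: by the \emph{Partition Step}, each application of $\intervalConstruction$ matches exactly one leader pair $\{l_{j\cdot s}, r_{j\cdot s}\}$ per interval it creates, and these intervals are precisely the depth-$(d+1)$ networks. Hence the number of edges matched at depth $d$ equals the number of depth-$(d+1)$ networks, which is $\Theta(n/\log^{(d+1)} n)$, as claimed.

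The step I expect to need the most care is the iterated-logarithm accounting: because each interval size is $\lceil \log N \rceil$ rather than exactly $\log N$, the constants hidden in $N_d = \Theta(\log^{(d)} n)$ accumulate through the recursion, and I would verify that they remain bounded so that $\log N_d = \Theta(\log^{(d+1)} n)$ and $\log\log N_d = \Theta(\log^{(d+2)} n)$ continue to hold. I would also note the implicit range restriction $\log^{(d+1)} n > 1$ needed for $\log^{(d+2)} n$ to be well-defined and positive, i.e., that the claim addresses a generic recursion level before the recursion bottoms out at depth $\log^* n$.
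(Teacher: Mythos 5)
Your proposal is correct and follows essentially the same route as the paper: both treat the claim as a direct aggregation of Lemma~\ref{lem:kt1_messages} and Lemma~\ref{lem:kt1_matched} over the $\Theta(n/\log^{(d)}n)$ vertex-disjoint depth-$d$ networks of size $\Theta(\log^{(d)}n)$, summing pulses via the partition property and taking the maximum for rounds. Your alternative count of matched edges (one leader pair per created interval, hence one per depth-$(d+1)$ network) is a slightly cleaner bookkeeping than the paper's per-network multiplication, but it is the same underlying argument and yields the same $n/\log^{(d+1)}n$ bound.
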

\begin{proof} 
To prove the claim, note that each depth-$d$ networks has $2N = O(\log^{(d)}n)$ nodes. Lemma~\ref{lem:kt1_messages} shows that the execution of $\intervalConstruction$ on a depth-$d$ network  takes $O(\log \log N) = O(\log^{(d+2)} n)$ rounds and sends $O(N) = O(\log^{(d)}n)$ pulses. 
Since there are $n/\log^{(d)}n$ depth-$d$ networks, the total number of pulses across all the depth-$d$ networks is $O(n)$. 
Lemma~\ref{lem:kt1_matched} tells us that $N/\log N = \log^{(d)}n/\log^{(d+1)}n$ edges are matched in each depth-$d$ network. This applies to all the $n/\log^{(d+1)}n$ depth-$d$ networks, which results in a total of $n/\log^{(d)}n$ edges that are matched.
\end{proof}

Equipped with Claim~\ref{cl:depthd}, we are now ready to complete the proof of the theorem.
When $d = \log^{*}n$, it follows that $n/\log ^{(\log^* n)}n = n$ edges are matched, and we are done. 
Moreover, Claim~\ref{cl:depthd} tells us that the total number of rounds is at most 
\[
\sum_{d=0}^{\log^{*}n} O(\log^{(d+2)}n) = O\lt( \log^{*}n\cdot\log\log n \rt),
\]
whereas the number of pulses can be upper-bounded by $O\lt( n \log^{*}n \rt)$. 
\end{proof}

In the distributed load balancing problem, we are given a bipartite graph and the goal is to assign the nodes on the left (the ``balls'') to the nodes on the right (the ``bins'') in a way such that the maximum load of each bin is minimized.
Since a perfect matching yields the optimal load assignment, we obtain the following:

\begin{corollary} \label{cor:load_balancing}
Under the $\kt_1$ assumption, deterministic distributed load balancing is possible with a bin load of $1$ in $O\lt( \log^{*}n\cdot\log\log n \rt)$ rounds and  $O\lt( n \log^{*}n \rt)$ messages. 
\end{corollary}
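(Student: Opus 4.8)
The plan is to observe that load balancing with a bin load of $1$ on a complete bipartite graph with $n$ balls and $n$ bins is \emph{identical} to the perfect matching problem solved by Theorem~\ref{lem:kt1_recursive}, and then transport the complexity bounds across this (essentially trivial) reduction. I identify the balls with the left nodes $L$ and the bins with the right nodes $R$, so that assigning a ball to a bin corresponds to selecting an incident edge, and the load of a bin is the number of balls assigned to it.

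First, I would make the equivalence precise in both directions. Given any perfect matching $M$, I assign each ball $l \in L$ to the unique bin $r \in R$ with $\set{l,r} \in M$; since every bin is incident to exactly one matched edge, each bin receives exactly one ball, so the maximum load is $1$. Conversely, any load-$1$ assignment is an injection from the $n$ balls into the $n$ bins, and an injection between two sets of equal size $n$ is a bijection, so the selected edges form a perfect matching. Hence achieving a bin load of $1$ is neither easier nor harder than computing a perfect matching, and the matching output directly furnishes the desired load assignment, since each ball learns the bin to which it is matched.

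Next, I would invoke Theorem~\ref{lem:kt1_recursive}, which under the $\kt_1$ assumption computes a perfect matching deterministically in $O\lt( \log^{*}n \cdot \log\log n \rt)$ rounds with pulse complexity $O\lt( n \log^{*}n \rt)$. Running this algorithm and interpreting each matched edge as a ball-to-bin assignment yields a load-$1$ balancing in the same round complexity. To state the cost in terms of messages rather than pulses, I would appeal to the observation from the introduction that any upper bound on the pulse complexity is simultaneously an upper bound on the message complexity, because a pulse is simulated by a single-bit message in the $\congest$ model. This converts the $O\lt( n \log^{*}n \rt)$ pulse bound into an $O\lt( n \log^{*}n \rt)$ message bound, giving exactly the claimed complexities.

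This statement admits no genuine obstacle, as the reduction is an identity rather than a nontrivial gadget. The only point requiring care is confirming that the assignment is actually delivered to the balls, i.e., to the nodes that must learn where they are routed. This holds because in the matching produced by $\intervalConstruction$, each ball either actively activates the link to its matched bin (and can compute that bin's identity from its rank under $\kt_1$) or receives a pulse from its matched bin (and thereby learns that bin's ID at the start of the next round). In either case the ball knows its target bin, so the load assignment is explicitly available to every ball upon termination.
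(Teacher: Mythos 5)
Your proposal is correct and matches the paper's reasoning, which derives the corollary in one line from Theorem~\ref{lem:kt1_recursive} via the observation that a perfect matching is exactly a load-$1$ assignment of balls (left nodes) to bins (right nodes). Your additional checks --- that the injection/bijection equivalence holds for equal-size sides, that pulses upper-bound messages, and that each ball actually learns its matched bin --- are all sound and simply make explicit what the paper leaves implicit.
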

}
\onlyShort{\section{A Lower Bound for Deterministic Algorithms }\vspace{-0.25cm}}
\onlyLong{ \section{A Lower Bound for Deterministic Algorithms in the Port Numbering Model}} \label{sec:lb}

In this section, we return to the port numbering model and show that randomization is crucial for any perfect matching algorithm that uses a small number of communication links. 
We make several assumptions that strengthen our lower bound, namely, we consider the standard $\local$ model~\cite{peleg}, where nodes may send messages of arbitrary size in each round, and 
deterministically label the nodes in $L$ with IDs from $\set{1,\dots,n}$ and the nodes in $R$ with IDs from $\set{n+1,\dots,2n}$.  

\begin{theorem} \label{thm:upper_complete}
   Any deterministic algorithm that constructs a perfect matching in the complete bipartite graph of $2n$ nodes has a message complexity of $\Omega(n^2)$ in the $\local$ model.
\end{theorem}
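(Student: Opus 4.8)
The plan is to show that a correct deterministic algorithm is forced to activate $\Omega(n^2)$ distinct links, which immediately lower-bounds the message complexity. I work in the port numbering model inside $\local$ with the fixed IDs as stated, and I track the \emph{active-link graph} $D$, the bipartite graph on $L\cup R$ whose edges are exactly the links over which at least one message is ever sent. Two ingredients drive the proof: a static indistinguishability lemma that pins down where the output matching can live, and an adaptive adversary that forces the activation of many links while $D$ is built up.

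First I would establish the structural lemma by a port-relabeling (indistinguishability) argument: for every node $v$, either its matched link is active, or $v$ activates at least $n-1$ of its incident links. The argument is a transposition. If $v$'s matched port were silent and $v$ had a second silent port leading to some $w$, then swapping these two silent ports of $v$ produces a port assignment \emph{indistinguishable} from the original, since no message ever traverses a swapped link, so every node receives identical messages in every round and hence produces identical output ports. Under the swapped assignment $v$'s output port now points to $w$, whose partner is someone other than $v$, so the reported matching is no longer consistent, contradicting correctness. Hence a node whose matched link is silent can have no other silent incident link, i.e.\ it activates all its remaining $n-1$ links.

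The consequence I extract is that, at termination, the maximum matching $\nu(D)$ of the active-link graph is large: if a constant fraction of the nodes fall into the ``$n-1$ active links'' case we already have $\Omega(n^2)$ active links and are done, and otherwise $D$ contains the output matching restricted to the remaining $n-o(n)$ nodes, so $\nu(D)\ge n-o(n)$. The heart of the proof is then a charging argument against an adaptive adversary that constructs the port assignment online. The adversary maintains a minimum vertex cover $C$ of the current $D$, so $|C|=\nu(D)$ by K\"{o}nig's theorem, kept balanced between the two sides, and routes every fresh activation so that the newly revealed endpoint lies in $C$; this keeps the activation non-augmenting and leaves $\nu(D)$ unchanged. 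To force $\nu(D)$ to rise from $k$ to $k+1$, the algorithm must drive some uncovered vertex to exhaust all cover vertices on the opposite side before the adversary is compelled to reveal an uncovered endpoint, and since $C$ is balanced this costs $\Theta(k)$ activations. Summing over $k=1,\dots,n-o(n)$ gives $\sum_k \Theta(k)=\Omega(n^2)$ activations, hence $\Omega(n^2)$ messages.

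I expect the main obstacle to be making the cover-maintenance argument fully rigorous: I must show that the adversary can always keep the revealed endpoint inside a balanced minimum cover until the uncovered vertex is genuinely exhausted (handling activations initiated from either side of the bipartition and re-balancing $C$ after each augmentation), and that the per-augmentation cost of $\Omega(k)$ cannot be amortized away by spreading probes across many uncovered vertices, since spreading never creates an augmenting path and so cannot cheapen the growth of $\nu(D)$. A secondary point to treat carefully is that the unbounded message size of $\local$ confers no advantage here: the obstruction is purely about which physical links are activated, and this is governed entirely by the adversarial port assignment rather than by message contents, so relaying global information cannot substitute for activating a link to the specific partner a node must actually identify.
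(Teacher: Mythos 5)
Your overall strategy is genuinely different from the paper's. The paper's adversary simply routes every message sent over an unused port to the lowest-indexed node on the other side that the sender has not yet contacted, and then derives a contradiction with Hall's marriage theorem: a node incident to only $o(n)$ used ports can only ever have reached a small prefix of the other side, so the high-ranked half of $L$ restricted to the matching has a neighborhood of size at most $n/8$ in the communication graph, which cannot support a matching on those nodes. Your port-swapping lemma is a sound (if heavier) replacement for the paper's reduction, which merely appends one extra message per matched edge at a cost of $2n=o(n^2)$ messages so that it may assume every matched link is activated; and your reduction to ``either $\Omega(n^2)$ links are active or the active-link graph has matching number $n-o(n)$'' is fine.

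The gap is in the heart of your argument, which you flag but do not close. Two concrete problems. First, a \emph{balanced} minimum vertex cover need not exist: if in the first round all of $L$ probes and the adversary funnels every probe to a single right node, the resulting star has a unique minimum cover consisting of one right vertex, and K\"onig's theorem gives you no control over which side the cover vertices lie on. Second, and more seriously, the claim that each increase of $\nu(D)$ from $k$ to $k+1$ costs $\Theta(k)$ activations that ``cannot be amortized away'' is exactly what needs proof, and it fails without an extra rule you never state: after a forced augmentation caused by an active node $u$, if the adversary does not place $u$ itself into the cover, then $u$ has already exhausted the opposite-side cover, the newly revealed vertex joins both the cover and $u$'s neighborhood, and every subsequent activation by $u$ forces another augmentation at a cost of one pulse each; a single node can then drive $\nu(D)$ from $k$ up to $n$ with only $O(n)$ further activations, destroying the $\sum_k\Theta(k)$ count. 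The argument is repairable --- drop minimality, maintain an arbitrary cover, and on each forced event add \emph{both} endpoints to it, so the cover stays balanced by construction, each node triggers at most one forced event, and the $j$-th forced event is triggered by a distinct node of degree at least $j-1$, whence $2|E|\ge\sum_j(j-1)=\Omega(n^2)$ --- but as written the charging scheme does not go through. You also leave unaddressed the consistency of the adaptive port assignment when nodes on both sides activate unused ports in the same round, which the paper handles by forcing $L$ and $R$ to speak in alternating rounds.
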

\onlyLong{ 

\begin{proof}
Note that each node has $n$ ports, over which it receives and sends messages, and, as explained in Section~\ref{sec:intro}, a node is unaware to which node a port is connected until a message is sent or received over that port. 
Moreover, to output the matching, we assume that, for every edge $\set{u,v}$ that is matched, node $u$ outputs the port connecting to $v$ and vice versa.

Note that we show our lower bound for general algorithms, where any node in either $L$ or $R$ may communicate with any number of nodes on the other side in a given round.
Consider some arbitrary enumeration $\mathcal{E}_L$ of the nodes in $L$ and also an enumeration $\mathcal{E}_R$ of the nodes in $R$. 
Below, when referring to the $i$-th node in $L$ (or $R$), we mean the $i$-th node with respects to these enumerations.

Let $ALG$ be any deterministic algorithm that outputs a perfect matching using at most $o(n^2)$ messages. 
Since any deterministic algorithms must correctly work on all port mapping between the nodes, we can specify the port mapping adaptively, i.e., depending on the messages sent by the algorithm. 
That is, our goal will be to describe a strategy for the adversary to obtain such a port mapping that results in $\Omega\lt( n^2 \rt)$ messages.
We say that a port $p$ of a node is \emph{unused} at the start of round $r$ if no message was sent or received over $p$ until the end of round $r-1$, and we say that $p$ is \emph{used} otherwise.

We first prove our lower bound assuming that the algorithm sends a message over every port that is output as part of the matching.
Moreover, we also assume that the nodes in $L$ send messages only in odd rounds and the nodes in $R$ send messages only in even rounds.
At the last paragraph, we will show how to remove these two restrictions.

Since we assume that nodes in $L$ and $R$ do not speak at the same time, it is straightforward to verify that the following port assignment rules for the adversary always yield a well-defined port assignment:
\begin{enumerate} 
    \item \textsl{Rule (L):} 
    Suppose a node $a_i \in L$ sends a message over an unused port $p$ in some odd round.
    We connect $p$ to node $b_j \in R$, where $j$ is the smallest index with respect to $\mathcal{E}_R$, such that no message was sent between $a_i$ and $b_j$.
    If $a_i$ sends messages over $k$ unused ports in the same round, we apply this rule $k$ times, i.e., these messages will be directed to the first (w.r.t.\ $\mathcal{E}_R$) nodes that have not yet communicated with $a_i$
    \item \textsl{Rule (R):}
    If a node $b_i \in R$ sends a message over an unused port in some even round, we proceed analogously to Rule~(L), with the roles of $L$ and $R$ being swapped, i.e., these messages are directed to the first nodes in the order defined by $\mathcal{E}_L$ that $b_i$ has not yet communicated with.
\end{enumerate}

Consider an execution where the ports are connected according to Rules (L) and (R).
Let $M$ be the matching output by the algorithm.
Since every used port corresponds to at least one sent message, and recalling that the algorithm sends at most $o\lt( n^2 \rt)$ messages, it follows that there exists a subset $M' \subseteq M$ of size $m' := |M'| \ge \frac{3}{4}n$, such that every node that occurs in some pair in $M'$ has at most $\frac{n}{32}$ used ports at the end of the execution.
Without loss of generality, let $M'=\set{(a_{i_1},b_{j_1}),\dots,(a_{i_{m'}},b_{j_{m'}})}$, i.e., for every $k \in \{1, \ldots, m'\}$, nodes $a_{i_k}$ and $b_{j_k}$ are matched to each other, where node $a_{i_k}$ refers to the $i_k$-th node in the enumeration $\mathcal{E}_L$, and, similarly, $b_{j_k}$ refers to the $j_k$-th node in $\mathcal{E}_R$.
Let $L' = \{a_{i_1},\dots, a_{i_{m'}}\}$, and $R' = \{b_{j_1},\dots, b_{j_{m'}}\}$. 
We define $L_{\ge n/8}' = \set{a_{i_k} \mid k \in [\lt\lfloor n/8 \rt\rfloor,m]}$, i.e., $L_{\ge n/8}'$ contains the $(m' - \lt\lfloor n/8 \rt\rfloor)$-highest ranked nodes of $L'$ that are in $M'$, where the ranking is with respect to the ordering $\mathcal{E}_L$. 
Note that $|L_{\ge n/8}'| \ge \frac{n}{2}$.
We define $R_{\ge n/8}'$ analogously, and also define $R_{< n/8}' = R' \setminus R_{\ge n/8}'$.

\begin{claim} \label{cl:hall}
The nodes in $L_{\ge n/8}'$ do not exchange any messages with the nodes in $R_{\ge n/8}'$. 
\end{claim}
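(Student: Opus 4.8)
The plan is to argue by contradiction directly from the adversarial port-assignment Rules~(L) and (R). First I would record the two facts that make the indices usable. By construction $L_{\ge n/8}'$ consists of those $a_{i_k}$ with $k \ge \lfloor n/8 \rfloor$, and since the indices $i_1 < i_2 < \cdots < i_{m'}$ (taken with respect to $\mathcal{E}_L$) are distinct and increasing, we have $i_k \ge k$, so every node of $L_{\ge n/8}'$ has rank at least $\lfloor n/8 \rfloor$ in $\mathcal{E}_L$; symmetrically every node of $R_{\ge n/8}'$ has rank at least $\lfloor n/8 \rfloor$ in $\mathcal{E}_R$. Second, a port of a node is used precisely when the node has exchanged a message with the distinct neighbor sitting on that port, so the number of used ports of a node equals the number of distinct nodes on the other side it has communicated with; by the choice of $M'$, every node appearing in $M'$, and hence every node in $L_{\ge n/8}' \cup R_{\ge n/8}'$, has at most $n/32$ such neighbors.

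Now suppose, for contradiction, that some $a_{i_k} \in L_{\ge n/8}'$ exchanges a message with some $b_{j_\ell} \in R_{\ge n/8}'$. I would consider the first message ever sent across the port joining them; at that moment this port is unused, so (because nodes in $L$ speak only in odd rounds and nodes in $R$ only in even rounds) the message is generated either by $a_{i_k}$ under Rule~(L) or by $b_{j_\ell}$ under Rule~(R). In the former case, Rule~(L) directs the message to the smallest-index node of $R$ that $a_{i_k}$ has not yet contacted, so choosing $b_{j_\ell}$ forces $a_{i_k}$ to have already communicated with every $b_{j'}$ with $j' < j_\ell$; together with $b_{j_\ell}$ this gives $a_{i_k}$ at least $j_\ell \ge \lfloor n/8 \rfloor$ distinct used ports, exceeding the bound $n/32$ for large $n$. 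The latter case is symmetric via Rule~(R): it forces $b_{j_\ell}$ to have reached every $a_{i'}$ with $i' < i_k$, giving $b_{j_\ell}$ at least $i_k \ge \lfloor n/8 \rfloor$ used ports, again a contradiction. Since both cases are impossible, no message is exchanged between the two sets.

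I expect the only real subtlety to be bookkeeping rather than ideas: being careful that each prior communication corresponds to a \emph{distinct} used port, and that the greedy ``smallest available index'' nature of Rules~(L) and (R) is exactly what converts a high $\mathcal{E}$-rank into a large lower bound on the number of used ports. The crucial structural insight, which is why the claim restricts to the high-ranked halves $L_{\ge n/8}'$ and $R_{\ge n/8}'$ rather than all of $L'$ and $R'$, is that only a node of index at least $\lfloor n/8 \rfloor$ is guaranteed to have saturated all of its lower-indexed partners before it can be joined to a high-index partner, and it is precisely this saturation that collides with the $n/32$ sparsity bound guaranteed for nodes in $M'$.
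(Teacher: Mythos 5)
Your proof is correct and follows essentially the same route as the paper's: both arguments combine the greedy ``smallest unused index'' behaviour of Rules~(L) and~(R) with the $n/32$ bound on used ports to show that a node of rank at least $\lfloor n/8\rfloor$ can never be the target of a message sent over a fresh port. Your first-message-by-contradiction framing is a minor repackaging (and treats the two directions more symmetrically than the paper, which only spells out the $L\to R$ direction); the one nit is that when $a_{i_k}$ activates several unused ports in a single round it need not have already contacted \emph{every} lower-indexed node of $R$, but the total count of used ports still reaches $j_\ell$, so the contradiction goes through.
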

\begin{proof} 
  By assumption the nodes in $L'$ send or receive at most $\frac{n}{32}$ messages each.
  Thus, we can pessimistically assume that, at any point in the execution, each node $a \in L'$ has received at most $\frac{n}{32}$ messages.
  Let $R_a$ denote the set of nodes that sent messages to $a$. 
  By Rule~(L), the (at most) $\frac{n}{32}$ messages that $a$ sends over unused ports throughout the execution will be addressed to the first $\frac{n}{32}$ nodes (w.r.t.\ $\mathcal{E}_R$) that $a$ has not yet communicated with.  
  Thus, it follows that only the first $|R_a|+\frac{n}{32} \le \frac{n}{16}$ nodes in $R$ can receive a message from $a$, and consequently, none of the
  nodes in $R_{\ge n/8}'$ receive a message from $a$.
\end{proof}

Now consider the induced bipartite subgraph $H'$, which has the same vertex set as the nodes occurring in $M'$ and where every edge corresponds to a message sent by the execution of algorithm $ALG$ between the nodes in $L'$ and $R'$, ignoring edge directions. 
(We omit messages that are sent to nodes not in $L'$ or $R'$.)
Since we assume that $ALG$ sends a message over every matched edge, $M'$ is a perfect matching in $H'$.
From Claim~\ref{cl:hall} we know that the nodes in $L'_{\ge n/8}$ only communicate with nodes in $R'_{< n/8}$.
However, the neighborhood of $L'_{\ge n/8}$ is of size at most $n/8$, whereas we have $|L'_{\ge n/8}| \ge |L'| - \frac{n}{8} \ge \frac{5n}{8}$.
Thus, we have arrived at a contradiction to Hall's Marriage Theorem~\cite{Hall1935}, which states that a perfect matching only exists if the neighborhood of every set of nodes is at least as large as the set itself.

Finally, we remove the restrictions that we assumed on algorithm $ALG$. 
First, observe that any algorithm that does \emph{not} satisfy the requirement that the nodes in $L$ only speak in odd rounds and the nodes in $R$ only speak in even rounds can be transformed into one that does satisfy the requirement while increasing the time complexity by a factor of two and without any overhead in terms of message complexity. 
Next, suppose that there is an algorithm $ALG'$ that correctly computes a perfect matching by outputting the port numbers of the matched edges, but does not send a message over every matched edge, while sending only $o(n^2)$ messages. 
We simply extend $ALG'$ by instructing each node to send a message over its matched port number. Thus, we obtain an algorithm with a message complexity of $o\lt( n^2 \rt) + 2n = o\lt( n^2 \rt)$ messages that sends a message over all matched edges.
This completes the proof of Theorem~\ref{thm:upper_complete}.
\end{proof}

}
\onlyLong{
\section{Conclusion}

We have initiated the study of the perfect matching problem in the link activation model in the synchronous setting.
In future work, it will be interesting to extend these results to other graph problems such as maximal independent sets. 
Moreover, an open question is whether a perfect matching is possible by sending only $O(n)$ pulses in the port numbering model.
}
{
\paragraph{Acknowledgements.}
Hugo Mirault and Peter Robinson were supported in part by National Science Foundation (NSF) grant CCF-2402836.
Ming Ming Tan was supported in part by National Science Foundation (NSF)
grant CCF-2348346.
Xianbin Zhu was partially supported by a grant from the Research Grants Council of the Hong Kong Special Administrative Region, China [Project No.\ CityU 11213620].}
\bibliographystyle{alpha}
\bibliography{references,refs}
\end{document}